\newcommand{\prt}[1]{\left(#1\right)}
\newcommand{\prtq}[1]{\left[#1\right]}
\newcommand{\prtg}[1]{\left\{#1\right\} }
\newcommand{\prtb}[1]{\bigg(#1 \bigg)}
\newcommand{\prtbg}[1]{\bigg\{#1 \bigg\}}
\newcommand{\betag}{\beta_g}
\newcommand{\Id}{\mathbb{I}}
\newcommand{\Hsi}{H^{\textup{si}}}
\newcommand{\Hsim}{\tilde{H}^{\textup{si}}_I}
\newcommand{\Eexp}{E_{\textup{exp}}}
\declaretheorem{theorem}
\newenvironment{equations}
{\begin{equation}\begin{aligned}}
{\end{aligned}\end{equation}\ignorespacesafterend}
\newcommand{\stMin}{\ket{\psi_g}}
\newcommand{\Zmin}{Z_g}
\newcommand{\Emin}{E_g}
\newcommand{\E}{\mathcal{E}}
\newcommand{\loc}{MSSG}
\newcommand{\nonloc}{global}
\newcommand{\blue}{}
\newcommand{\bl}{l}
\begin{document}


\title{Generation of minimum energy entangled states}

\author{Nicol\`o Piccione}
\email{nicolo.piccione@univ-fcomte.fr}
\affiliation{Institut UTINAM, CNRS UMR 6213, Universit\'{e} Bourgogne Franche-Comt\'{e}, Observatoire des Sciences de l'Univers THETA, 41 bis avenue de l'Observatoire, F-25010 Besan\c{c}on, France}

\author{Benedetto Militello}
\affiliation{Universit\`a degli Studi di Palermo, Dipartimento di Fisica e Chimica - Emilio Segr\`{e}, via Archirafi 36, I-90123 Palermo, Italy}
\affiliation{INFN Sezione di Catania, via Santa Sofia 64, I-95123 Catania, Italy}

\author{Anna Napoli}
\affiliation{Universit\`a degli Studi di Palermo, Dipartimento di Fisica e Chimica - Emilio Segr\`{e}, via Archirafi 36, I-90123 Palermo, Italy}
\affiliation{INFN Sezione di Catania, via Santa Sofia 64, I-95123 Catania, Italy}

\author{Bruno Bellomo}
\affiliation{Institut UTINAM, CNRS UMR 6213, Universit\'{e} Bourgogne Franche-Comt\'{e}, Observatoire des Sciences de l'Univers THETA, 41 bis avenue de l'Observatoire, F-25010 Besan\c{c}on, France}

\begin{abstract}	
Quantum technologies exploiting bipartite entanglement could be made more efficient by using states having the minimum amount of energy for a given entanglement degree. Here, we study how to generate these states in the case of a bipartite system of arbitrary finite dimension either by applying a unitary transformation to its ground state or through a zero-temperature thermalization protocol based on turning on and off a suitable interaction term between the subsystems.
In particular, we explicitly identify three possible unitary operators and five possible interaction terms. On one hand, two of the three unitary transformations turn out to be easily decomposable in terms of local elementary operations and a single \blue{nonlocal} one, making their implementation easier. On the other hand, since the thermalization procedures can be easily adapted to generate many different states, we numerically show that, for each degree of entanglement, generating minimum\blue{-}energy entangled states \blue{costs, in general,} less than generating the vast majority of the other states.
\end{abstract}

\maketitle

\section{Introduction}

Entanglement, apart from playing a crucial conceptual role in quantum mechanics~\cite{Einstein1935,Bell1964}, is widely considered a resource for quantum technologies since it enables various genuinely quantum protocols~\cite{Horodecki2009,BookNielsen2010}. Accordingly, the study of suitable protocols to generate entangled states in a great variety of physical setups has attracted much attention in \blue{recent} decades~\cite{Plenio1999,Braun2002,Bellomo2008,Sarlette2011, Krauter2011,Bellomo2013,Bellomo2015,Bellomo2017,Luo2017,Cakmak2019,Goldberg2019,Bellomo2019,Egger2019,Katz2020}.
However, generating quantum states with a certain amount of entanglement usually requires converting other resources into it, such as coherence~\cite{Xi2019,Korzekwa2019}, \blue{non}equilibrium thermal resources~\cite{Korzekwa2019}, and energy~\cite{Piccione2020Energy}.
Among these resources, energy is one of the most important with regard to the evaluation of the generation cost. Recently, some studies dealt with the energy cost of quantum operations~\cite{Ikonen2017}, including the generation of correlations~\cite{Bakhshinezhad2019} and entanglement~\cite{Galve2009,Beny2018,Hackl2019,Piccione2020Energy}. In particular, in Ref.~\cite{Piccione2020Energy} we found the bounds on the local energy of a bipartite system of arbitrary finite dimension imposed by having a certain amount of entanglement. Moreover, we have also identified the states saturating these bounds. We call any state saturating the lower bound a \enquote{minimum energy entangled state} (MEES).

Protocols to generate entangled states typically exploit unitary processes~\cite{Egger2019,Cakmak2019,Goldberg2019} and, possibly, measurements~\cite{ Bellomo2019,Cakmak2019,Katz2020}. On the other hand, dissipative processes have been also identified as possible sources of entangled states, e.g, when the steady states of the dissipative dynamics are entangled \cite{Braun2002,Bellomo2008,Sarlette2011,Krauter2011,Bellomo2013,Bellomo2015}. In particular, entangled states can be obtained through \blue{zero-temperature} thermalization protocols (such as those studied in Ref.~\cite{Piccione2019Work}) requiring the implementation of suitable interaction terms such that the ground state of the total Hamiltonian is the desired entangled state.

Although the majority of quantum information protocols rely on two-level systems~\cite{BookNielsen2010}, $d$-level systems (qudits) may be more powerful for information processing~\cite{Bechmann-Pasquinucci2000,Bullock2005,Lanyon2009,Yan2019,Kiktenko2020}.
Logical operations for qudits have already been implemented in systems like molecular magnet transistors~\cite{Godfrin2017}, superconducting systems~\cite{Blok2020}, and integrated optics~\cite{Kues2017FULL,Reimer2019,Imany2019}.  In particular, single-qudit gates connected to the generalized CNOT gate~\cite{Wilmott2011}, also called controlled-SUM~\cite{Blok2020}, have already been experimentally implemented~\cite{Gao2019,Isdrailua2019}. Moreover, universal quantum computation based on trapped ions qudits has been suggested to be feasible~\cite{Low2020}. In this context, efficient ways to generate MEESs of arbitrary dimensions are desirable.

In this paper, we present various proposals to generate MEESs efficiently, through unitary or dissipative dynamics. Regarding the unitary approaches, we provide explicit unitary operators connecting the separable ground state of the bipartite system to the desired MEES, whatever the dimensions of the subsystems are.
Remarkably, in the two approaches that we will denote as \blue{mostly single-system gates (\loc{})} ones, these unitary operators can be decomposed in elementary two-level rotation gates and a local change of phase acting on only one subsystem, and, as last gate, a generalized CNOT one, possibly making their implementation easier. Regarding the dissipative approaches, we find five classes of suitable Hamiltonians for zero-temperature thermalization and compare them.
For all classes, we show that the energetic cost generation for the MEESs is practically the minimum one.
Our calculations also show that, for generating MEESs, some interaction Hamiltonians always perform better than others.
Even if our aim is to generate MEESs, our proposals can be easily adapted to generate a large class of states. In particular, both for the unitary and dissipative scenarios, some proposals are specific for states having a structure similar to that of MEESs, while the other proposals could work for any target state.

The paper is organized as follows. In Sec.~\ref{sec:BackgroundInformation} we recall the definition and properties of MEESs, while in Sec.~\ref{sec:UnitaryTransformation} the unitary transformation approach to the generation of these states is discussed. In Sec.~\ref{sec:TwoApproaches}, a different approach based on zero-temperature thermalization processes is explored. In Sec.~\ref{sec:ApplTherm} we present a detailed comparison of the various methods based on thermalization focusing on a $3\times 4$ system. Finally, in Sec.~\ref{sec:Conclusions} we give some conclusive remarks, while we provide details of our analysis and explicit calculations in several \blue{a}ppendices.

\section{\label{sec:BackgroundInformation}Minimal background on minimum energy entangled states}

Let us consider an arbitrary finite bipartite system $S$ composed of subsystems $A$ and $B$, whose dimensions $N_A$ and $N_B$ satisfy $N_A \leq N_B$.
Their Hamiltonians, $H_A$ and $H_B$, have the following structure:
\begin{equation}
H_X = \sum_{j =0}^{N_X-1} X_j \dyad{X_j},\quad X=A,B,
\end{equation}
where $X_0 \leq X_1 \leq \cdots \leq X_{N_X-1}$. The free Hamiltonian of the bipartite system is then $H_0 = H_A + H_B$ and the dimension of the whole Hilbert space is $N_S=N_A N_B$.

In the case of pure states, the standard entanglement quantifier is the entropy of entanglement~\cite{Vidal2000,Plenio2007,Horodecki2009}, that is, the Von Neumann entropy of one of the reduced states\blue{,}
\begin{equation}
\E \prt{\ket{\psi}} = S\prt{\Tr_{A(B)}{\dyad{\psi}}},
\end{equation}
where $S(\rho) = - \Tr{\rho \ln \rho}$.
For mixed states various quantifiers exist, but a standard requirement is that they coincide with the entropy of entanglement when applied to pure states~\cite{Plenio2007}.

In Ref.~\cite{Piccione2020Energy} we showed that, for a given degree of entanglement $\E$, \blue{with the condition $\E > \ln d_g$, where $d_g$ is the minimum between the ground-energy degeneracy of the Hamiltonians $H_A$ and $H_B$,} the states minimizing the average of $H_0$ are pure states of the form 
\begin{equation}
\label{eq:MinimumState}
\stMin=\frac{1}{\sqrt{\Zmin}}\sum_{j=0}^{N_A-1}e^{i \theta_j} e^{-\blue{\prt{\betag/2}} E_j} \ket{E_j},
\end{equation}
where $1/\sqrt{\Zmin}$ is a normalization factor, $E_j = A_j + B_j$, $\ket{E_j} = \ket{A_j B_j}$, $e^{i \theta_j}$ are arbitrary phase factors, and $\betag$ is the positive solution of the equation
\begin{equation}
\label{eq:BetagEquation}
\prt{-\betag \pdv{\betag} + 1}\ln \Zmin = \E.
\end{equation}
The energy of the MEESs $\stMin$ can be easily calculated as 
\begin{equation}
\label{eq:minimumenergy}
	\Emin = -\partial_{\beta_g} \ln Z_g.
\end{equation}
It is worth mentioning that the form of the MEESs is such that the parameter $Z_g$ is equal to the partition function of a thermal state. This is one of the properties of the MEESs. Among these properties, we also cite that MEESs are connected through local operations and classical communication, they naturally appear in some many-body systems, and they can be used in various ways to improve the energetic efficiencies of quantum protocols relying on entanglement. More details about all these facts are given in Ref.~\cite{Piccione2020Energy}.
\blue{We finally observe that in the trivial case $\E \leq \ln d_g$ the minimum energy is $E_0$ and minimum-energy pure states can be searched in the ground-energy eigenspace of $H_A + H_B$.}

The aim of this paper is to present possible strategies to generate states of the form of Eq.~\eqref{eq:MinimumState}. In particular, we will focus on generation protocols based, respectively, on the application of unitary transformations to the \blue{non}degenerate ground state $\ket{E_0}$ (hereafter we assume that $A_1 > A_0$ and $B_1 > B_0$), and on the dissipative dynamics associated to zero-temperature thermalizations.
The state $\ket{E_0}$ is the most natural choice for the starting state since, in many cases, it can be easily obtained as the result of an approximate zero-temperature thermalization. Indeed, if the energy gap between the ground and the first excited level of the Hamiltonian $H_A + H_B$ is much higher than $k_B T$, where $k_B$ is the Boltzmann constant and $T$ is the temperature of the bath, the steady state of the dynamics, i.e., the thermal state at temperature $T$ is a very good approximation of the ground state.

\section{\label{sec:UnitaryTransformation} The unitary transformation approach}

In this section, we provide three explicitly constructed unitary operators that can be easily decomposed into other ones.
Despite the fact that these three operators share a similar mathematical structure, they strongly differ in their physical implementation because the first one \blue{$U_S$} can be decomposed as a product of \blue{non}local simple operators \blue{whereas} the other two \blue{$\tilde{U}_A$ and $\tilde{U}_B$} are compositions of simple local unitary operators and a single generalized CNOT gate~\cite{Wilmott2011} applied as last operation.

\subsection{A \nonloc{} unitary transformation}

We obtain the operator $U_S$ as a change of basis between the eigenbasis of the Hamiltonian $H_0$ and a new basis obtained by applying the Gram-Schmidt orthogonalization process to the set of linearly independent vectors $\prtg{\ket{\phi},\ket{E_1},\dots,\ket{E_{N_A-1}}}$, where $\ket{\phi}$ has the form
\begin{equation}
\label{eq:TargetStates}
\ket{\phi} = \sum_{j=0}^{N_A -1} e^{i \theta_j} \sqrt{\lambda_j} \ket{E_j}.
\end{equation}
The rest of the basis, i.e., all the vectors $\ket{A_i B_j}$ with $i \neq j$, are left unchanged. By varying the state $\ket{\phi}$, a family of unitary operators is obtained.
This procedure, presented in detail in Appendix~\ref{sec:APPUnitaryOperatorConstruction} for a more general case, here leads to the vectors
\begin{multline}
\label{eq:CanonicalBasis}
\ket{\phi_k^S} =
\prtb{\gamma_{k}\ket{E_k} -e^{i(\theta_0 -\theta_k)}\sqrt{\lambda_0 \lambda_k} \ket{E_0}\\ - \sum_{j=k+1}^{N_A-1} e^{i(\theta_j -\theta_k)} \sqrt{\lambda_k \lambda_j} \ket{E_j}}/\sqrt{\gamma_k \gamma_{k-1}},
\end{multline}
where $1 \leq k \leq N_A-1$, $\gamma_k = 1 - \sum_{j=1}^{k} \lambda_j$, and $\gamma_0 = 1$.

Denoting $\ket{\phi_0^S}=\ket{\phi}$, the unitary operator $U_S$ takes the form
\begin{equation}
\label{eq:UnitaryOperatorGlobal}
U_S = \sum_{k=0}^{N_A-1} \dyad{\phi_k^S}{E_k} +\sum_{i=0}^{N_A-1} \sum_{j=0,j\neq i}^{N_B-1} \dyad{A_i B_j}.
\end{equation}
By choosing $\ket{\phi}=\stMin$, this family of operators  provides an explicit construction for an operator making the $\ket{E_0}\rightarrow \stMin$ transition.

Interestingly, the operator $U_S$ can be decomposed as the product of simple \blue{non}local operators (see Appendix~\ref{sec:APPUnitaryOperatorConstruction} for the proof)
\begin{equation}
U_S = \prod_{j=1}^{N_A} U_{S,j},
\end{equation}
where, for $1\le j <N_A$,
\begin{multline}
U_{S,j} = \sqrt{\frac{\gamma_j}{\gamma_{j-1}}} \prt{\dyad{E_0}+\dyad{E_j}} +\Id_{\perp,j} \\
+\sqrt{\frac{\lambda_j}{\gamma_{j-1}}} \prt{e^{i \theta_j}\dyad{E_j}{E_0}-e^{- i \theta_j}\dyad{E_0}{E_j}},
\end{multline}
while, for $j = N_A$,
\begin{equation}
U_{S,N_A} = e^{i \theta_0} \dyad{E_0} + \Id - \dyad{E_0} ,
\end{equation}
being 
\begin{equation}
\blue{\Id_{\perp,j} = \sum_{i=1, i \neq j} \dyad{E_i} +\sum_{i=0}^{N_A-1} \sum_{k=0, k \neq i}^{N_B-1} \dyad{A_i B_k}}
\end{equation}
the identity operator on the subspace perpendicular to the one spanned by $\ket{E_0}$ and $\ket{E_j}$, and $\Id$ the identity operator in the whole Hilbert space.
From a practical point of view, this may not be the best way to decompose  the operator $U_S$, as it implies the application of $N_A-1$ two-qudit gates, followed by a phase change.
However, a similar decomposition is very fruitful for the \loc{} unitary operators described in the next subsection. Since each component of the above decomposed form of the operator $U_S$  acts on the whole Hilbert space of system $S$, we refer to it as the \nonloc{} unitary operator, as opposed to the \loc{} ones.

Even if in this paper we concentrate on the generation of MEESs for a fixed entanglement, we remark that by applying the Gram-Schmidt procedure to the set $\{\ket{\varphi},\ket{A_0 B_1},\ket{A_0B_2},\dots,\ket{A_{N_A-1} B_{N_B-1}}\}$, the resulting
family of operators can describe any transition from $\ket{E_0}$ to almost any state $\ket{\psi}$ of the Hilbert space of system $S$. A solvable exception is the case when the target state is orthogonal to the initial state $\ket{E_0}$~\footnote{\label{note1}The states which have zero projection on $\ket{E_0}$ can indeed be obtained by means of simple modifications of the method. See the relevant discussion in Appendix~\ref{sec:APPUnitaryOperatorConstruction}.}.
Anyway, the simple form of the MEESs allow the unitary operators to be simpler when connecting $\ket{E_0}$ to one of them compared to when the target state is arbitrary. 
Indeed, the Schmidt decomposition of a MEES involves only the states $\ket{E_j}$, as in Eq.~\eqref{eq:TargetStates}, where every coefficient $\lambda_j$ is a simple function of the same parameter $\beta_g$. Moreover, the target MEES can be chosen with all the phases equal to zero.
In general, in the case of an arbitrary target state the resulting unitary operator is represented by a $N_S \times N_S$ matrix of \blue{non}trivial elements as opposed to the smaller number of \blue{non}trivial $N_A \times N_A$ elements of the matrix associated to Eq.~\eqref{eq:UnitaryOperatorGlobal}  which holds for target states like those of Eq.~\eqref{eq:TargetStates}.
The possibility to generate practically every state will be used in Sec.~\ref{sec:ApplTherm} to compare the behavior of the MEESs to all the other states in the zero-temperature thermalization approach.

\subsection{Mostly single-system gates (MSSG) unitary transformations\label{subsec:LocalUnitaryApproach}}

Another unitary transformation mapping $\ket{E_0}$ to an arbitrary state $\ket{\phi}$ of the form of Eq.~\eqref{eq:TargetStates} is the one composed as follows (as a particular case, $\stMin$ is obtained  when $  \sqrt{\lambda_j}=e^{-\frac{\betag}{2} E_j}/ \sqrt{Z_g}$). A local unitary operator $U_A$ acts as the previous operator $U_S$ of Eq.~\eqref{eq:UnitaryOperatorGlobal} but on system $A$, i.e., 
\begin{equation}
\label{eq:UnitaryOperatorLocalA}
U_A = \sum_{k=0}^{N_A-1} \dyad{\phi_k^A}{A_k},
\end{equation}
where each $\ket{\phi_k^A}$ is equal to $\ket{\phi_k^S}$ of Eq.~\eqref{eq:CanonicalBasis} with the substitution $\ket{E_j} \rightarrow \ket{A_j}$.
Applying $U_A$ to $\ket{A_0}$ induces the following transition
\begin{equation}
\ket{A_0} \rightarrow \ket{A_g}= \sum_{j=0}^{N_A-1}e^{i \theta_j}\sqrt{\lambda_j} \ket{A_j}.
\end{equation}
Exactly like $U_S$, even $U_A$ can be decomposed as a sequence of elementary two-dimensional rotations followed by a local phase change
\begin{equation}
U_A = \prod_{j=1}^{N_A} U_{A,j},
\end{equation}
where, for $1 \le j<N_A$,
\begin{multline}\label{eq:UAi}
U_{A,j} = \sqrt{\frac{\gamma_j}{\gamma_{j-1}}} \prt{\dyad{A_0}+\dyad{A_j}} +\sum_{k \neq 0,j} \dyad{A_k} \\
+\sqrt{\frac{\lambda_j}{\gamma_{j-1}}} \prt{e^{i \theta_j}\dyad{A_j}{A_0}-e^{- i \theta_j}\dyad{A_0}{A_j}},
\end{multline}
while, for $j=N_A$,
\begin{equation}
U_{A, N_A} = e^{i \theta_0} \dyad{A_0} + \sum_{k=1}^{N_A-1}  \dyad{A_k}.
\end{equation}
	
After having obtained $\ket{A_g}$, a single instance of the generalized CNOT gate~\cite{Wilmott2011} can be applied to obtain $\ket{\phi}$.
For the CNOT gate we use the following form~\footnote{In Ref.~\cite{Wilmott2011} the CNOT gate is given for a bipartite system composed of two qubits with the same dimensions. Here, we have used what seemed to us the most natural extension of the CNOT gate operator, i.e., using system $B$ as a qudit of dimension $N_A$.
Therefore, we apply the CNOT gate such that it involves only the first $N_A$ levels of both systems.}
\begin{equation}
\label{eq:CNOT}
U_{G_A} = \sum_{i,j=0}^{N_A-1} \dyad{A_i B_{i \oplus j}}{A_i B_j} + \sum_{i=0}^{N_A-1} \sum_{j=N_A}^{N_B-1} \dyad{A_i B_j},
\end{equation}
where $ i \oplus j = (i + j)\mod N_A$.
Therefore, the final unitary operator is given by $\tilde{U}_A = U_{G_A} U_A$.
The main advantage of using this technique is that, being $U_A$ local,
only a single standard two-qudit gate has to be used to construct $\tilde{U}_A$.
Moreover, we have provided an explicit decomposition of $U_A$ in terms of elementary operations.

A similar procedure can be implemented by applying on system $B$ a local unitary operator $U_B$ acting on system $B$ analogously to how $U_A$ acts on system $A$~\footnote{Indeed, the operator $U_B$ acts as $U_A$ on the first $N_A$ levels of system $B$, while leaving unvaried the others.} and then using a CNOT gate, $U_{G_B}$, defined analogously to $U_{G_A}$ by reversing the role of qudits $A$ and $B$, i.e.,
\begin{equation}
\label{eq:CNOTB}
U_{G_B} = \sum_{i,j=0}^{N_A-1} \dyad{A_{i \oplus j} B_{j}}{A_i B_j} + \sum_{i=0}^{N_A-1} \sum_{j=N_A}^{N_B-1} \dyad{A_i B_j}.
\end{equation}
In this case, the final unitary operator is given by $\tilde{U}_B = U_{G_B} U_B$. We remark that, contrarily to $U_S$ (or its modified form for target states orthogonal to $\ket{E_0}$, detailed in Appendix~\ref{sec:APPUnitaryOperatorConstruction}), the operators $\tilde{U}_A$ and $\tilde{U}_B$  do not allow to move from $\ket{E_0}$ to any other state .
In fact, by construction\blue{,} only states such as those of Eq.~\eqref{eq:TargetStates} can be obtained (also here the procedure should be modified in the case of target states orthogonal to $\ket{E_0}$).  

The implementation of $\tilde{U}_A$ and $\tilde{U}_B$ in quantum circuits is expected to be largely simplified by the fact that they are given in terms of simple two-level local rotations and a local change of phase on a subsystem, and, subsequently, of a generalized CNOT gate on the total system.

We finally observe that the  decompositions of the three unitary operators provided in this section are given in terms of operators that never lower the energy of system $S$.
This could be useful towards an efficient implementation of these operators since the apparatus implementing the unitary operators never has to recover energy from system $S$ during the process.
Moreover, we remark that these unitary protocols can be used even if the starting state $\rho$ is not exactly the ground state $\ket{E_0}$ but a good approximation of it, i.e., such that the fidelity $F(\rho,\dyad{E_0})$ is very near one, where the fidelity for two arbitrary states $\sigma $ and $\dyad{\varphi} $, in the case when at least one of them is pure, is equal to $F=\ev{\sigma}{\varphi}$. Indeed, one can easily check that for any suitable unitary operator $ U$ generating the requested state $U\ket{E_0}$, the final fidelity is equal to the initial one since $F(U\rho U^\dagger,U\dyad{E_0}U^\dagger)=F(\rho,\dyad{E_0})$.

\section{The zero-temperature thermalization approach\label{sec:TwoApproaches}}

In this section, we consider a different way to generate the states $\stMin$ exploiting a zero-temperature thermalization protocol.
This protocol consists in turning on a suitable interaction between systems $A$ and $B$, each of which had already thermalized to its own thermal state (ground state at $T=0$), \blue{whereas} they are also weakly coupled to a common bath at zero temperature.
After the thermalization takes place the coupling with the bath is suppressed and the interaction between $A$ and $B$ is turned off. In particular, here we make the assumption that the phases of turning on and off the interaction are so rapid that the state of the system is practically constant during the corresponding time intervals.
\blue{If one wants that after the generation protocol the state does not change in time, one can also assume that after the thermalization the coupling with the bath is suppressed.}

We considered this kind of protocols in the thermodynamic context of work extraction from a resource in Ref.~\cite{Piccione2019Work} where the efficiency of such protocols has been discussed.
At zero temperature, the ideal  efficiency and the expended energy of this protocol is given by~\cite{Piccione2019Work}
\begin{equation}
\label{eq:ProtocolEfficiency}
\eta = \frac{\ev{H_0}{\psi} - E_0}{\Eexp},
\quad
\Eexp = \ev{H_I}{E_0} - \ev{H_I}{\psi},
\end{equation}
where $H_I$ is the interaction term that we turn on and off and $\ket{\psi}$ is the ground state of the Hamiltonian $H = H_0 + H_I$.
In the above formula for $\eta$, the numerator represents the energy stored in system $S$ at the end of the protocol minus the initial one, \blue{whereas} the denominator is the minimum amount of expended energy to turn on and off the interaction.
Therefore, we can call the denominator alone $\Eexp$, standing for \enquote{expended energy}.
If our only concern is to maximize the entanglement production with respect to the used energy, then it is more important to minimize $\Eexp$ instead of maximizing $\eta$.
We stress that this quantity represent the minimum energy required to perform the thermalization protocol. It corresponds to the actual amount of expended energy only in the ideal case when all energy losses take place only during the thermalization of the system. 

To produce a MEES, $\stMin$, we need to find an interaction Hamiltonians $H_I$ making $\stMin$ the ground state of the Hamiltonian $H=H_0 + H_I$.
In the following two subsections, we deal with the problem of finding some suitable interaction Hamiltonians leading to the desired ground state through methods that can work for any bipartite system.

\subsection{\label{subsec:SimpleApproach} A simple and a modified simple approach}

The easiest total Hamiltonian leading to the desired MEES through a zero-temperature thermalization is
\begin{equation}
\label{eq:AlternativeHamiltonian}
\Hsi_S = -V_S \dyad{\psi_g}, 
\end{equation}
where $V_S>0$. This Hamiltonian has $\stMin$ as \blue{non}degenerate ground state and a degenerate excited subspace of dimension $N_S-1$.
The interaction term needed to obtain this Hamiltonian is
\begin{equation}
\label{eq:SimpleGlobalInteractionHamiltonian}
\Hsi_I = -V_S \dyad{\psi_g} - H_0.
\end{equation}

From the mathematical point of view, the parameter $V_S$ can assume any positive value. However, absolute zero temperature is never physically attainable. Therefore, $V_S$ must be high enough so that, for a sufficiently low temperature, the energy gap between the ground state and the excited states is high enough to make good the zero-temperature approximation.
In particular, with this Hamiltonian one also has to consider the high-degeneration of the excited level.
In the thermal state, the population ratio between the ground state and the degenerate excited level is
\begin{equation}\label{eq:pe/pg}
\frac{p_e}{p_g} = (N_S-1) e^{-\beta V_S},
\end{equation}
where $\beta=1/\prt{k_B T}$, $k_B$ is the Boltzmann constant, and $T$ is the temperature of the bath.
Therefore, the larger the subsystems $A$ and $B$ are the higher $V_S$ has to be in order to make the above ratio sufficiently small, consistently with the zero-temperature approximation.

The Hamiltonian of Eq.~\eqref{eq:AlternativeHamiltonian} can be improved by considering that $\stMin$ has not all the components in the bare basis, but only $N_A$ components.
For simplicity, let us consider that $A_0=B_0=0$ and $A_1,B_1 > 0$.
Then, one could use the following interaction term:
\begin{equation}
\label{eq:SimpleInteractionHamiltonianModified}
\Hsim = -V_M \dyad{\psi_g} - \sum_{i=0}^{N_A-1} E_i \dyad{E_i},
\end{equation}
where $V_M>0$.
By using this interaction term, the total Hamiltonian $H^{\textup{si}}_M=H_0+\tilde{H}^{\textup{si}}_I $ has again $\stMin$ as its ground state but the degeneracy of the first excited level is only $N_A-1$.

Let us define $\Delta = \min(A_1,B_1)$.
Now, we make the assumption that the first excited level of $H_0$ is \blue{non}degenerate and that the other excited levels are enough higher than the first one so that the condition $e^{-\beta \Delta} \ll 1$ implies that the zero-temperature approximation for $H_0$ is valid.
The value of $V_X$ (where $X=S, M$ depending on which case is analyzed) such that the population ratio between the ground and first excited level is the same for $H_0$ and for $H^{\textup{si}}_X$ [cfr. Eq.~\eqref{eq:pe/pg}] results equal to
\begin{equation}
\label{eq:EstimationOfV}
V_X \simeq \Delta \prtq{1 + \frac{\ln(N_X - 1)}{\beta \Delta}},
\end{equation}
where $N_S$ had been already defined as $N_S=N_A N_B$, and $N_M=N_A$. When $V_X$ has such value, we can be certain that the zero-temperature approximation holds also after having turned on the interaction term.

When using this interaction term to enable the protocol (again with $A_0=B_0=0$ for simplicity), its efficiency and the expended energy for the generation of a MEES $\ket{\psi_g}$ are easily obtained from Eq.~\eqref{eq:ProtocolEfficiency} as
\begin{equation}
\label{eq:ProtocolEfficiency2}
\eta = \frac{\Emin}{\Eexp},\quad
\Eexp = \Emin + V_X \prt{1-\lambda_0},
\end{equation}
where we recall that the energy $\Emin$ of the MEESs is given in Eq.~\eqref{eq:minimumenergy} and where $\lambda_0$ can be computed  from Eq.~\eqref{eq:MinimumState}. For  the generation of states $\ket{\phi}$ having the form of Eq.~\eqref{eq:TargetStates}, Eq.~\eqref{eq:ProtocolEfficiency2} holds by replacing $\Emin$ with the energy of the state $\ket{\phi}$ and using for $\lambda_0$ its actual value for this state.

We notice that, using Eq.~\eqref{eq:SimpleGlobalInteractionHamiltonian}, any arbitrary state of system $S$ can be generated as the result of the thermalization by replacing $\stMin$ with the desired state $\ket{\psi}$.
Therefore, it is possible to compare how MEESs behave with respect to all the other states.
We already know from Ref.~\cite{Piccione2020Energy} that, for a fixed amount of entanglement, $\ev{H_0}{\psi}$ is minimized by $\stMin$.
On the other hand, for a fixed entanglement, $V_X\prt{1-\lambda_0}$ is minimized by taking the largest possible value of $\lambda_0$.
Therefore, from a mathematical point of view, the states which minimize $\Eexp$ for a given entanglement are not the MEESs.
However, in all the simulations we have run, as in the example shown in Sec.~\ref{sec:ApplTherm}, the $\Eexp$ associated to the MEESs result to be very close to the real minima of $\Eexp$ so that in practical applications one can directly choose to generate the MEESs if the aim is to minimize the expended energy.
Specializing to the case of Eq.~\eqref{eq:SimpleInteractionHamiltonianModified}, the analytical solution for the states minimizing $\Eexp$ can be found, as shown in Appendix~\ref{sec:APPMinimizationSimple}. We also remark that using this simple modified approach, only states like those of Eq.~\eqref{eq:TargetStates} can be obtained by replacing $\ket{\psi_g}$ with $\ket{\phi}$ in Eq.~\eqref{eq:SimpleInteractionHamiltonianModified}. 

\subsection{\label{subsec:UnitaryApproach}A unitary transformation approach}

Another way to obtain a new Hamiltonian having $\stMin$ as its ground state is to apply a transformation of the kind $U H_0 U^\dagger$ to the original Hamiltonian $H_0$, with a unitary operator $U$ such that $U \ket{E_0} = \stMin$.
This has the advantage that when both systems $A$ and $B$ can be considered in their ground state, i.e., the zero-temperature approximation is well satisfied for both of them, then it will be valid also after the turning on of the interaction term since a unitary transformation does not change the spectrum of the Hamiltonian $H_0$. It follows that the zero-temperature thermalization will bring the system to the new ground state $\stMin$.
Within this approach, the interaction term assumes the form:
\begin{equation}
\label{eq:InteractionHamiltonian}
H_I = U H_0 U^\dagger - H_0.
\end{equation}

We already know three different unitary operators assuring the requested transition,  $U \ket{E_0} = \stMin$, from Sec.~\ref{sec:UnitaryTransformation}.
In the next two subsections we describe the interaction Hamiltonians that they generate.

\subsubsection{$H_I$ from the \nonloc{} unitary operator}

Applying the unitary operator $U_S$ of Eq.~\eqref{eq:UnitaryOperatorGlobal} in Eq.~\eqref{eq:InteractionHamiltonian}, one can obtain the explicit form of $H_I$. This is done in Appendix~\ref{sec:APPExplicitFormHI} for a general target state $\ket{\psi}$. Here, we report the results in the case of a target state of the form of Eq.~\eqref{eq:TargetStates} by only focusing on the $N_A \times N_A$ terms that are non trivial, i.e., all the terms of the kind $\dyad{A_i B_i}{A_j B_j}$.
The remaining part of the matrix is just filled with zeroes.
Denoting $X_0 = \ev{H_I}{E_0}/\lambda_0$, where
\begin{equation}
\ev{H_I}{E_0} = E_0 (\lambda_0 - 1) + \lambda_0 \sum_{i=1}^{N_A-1} \frac{\lambda_i E_i}{\gamma_i \gamma_{i-1}},
\end{equation}
\begin{equation}
X_i \equiv E_0 - \frac{E_i}{\gamma_{i-1}} + \sum_{k=1}^{i-1} \frac{\lambda_k E_k}{\gamma_k \gamma_{k-1}},
\end{equation}
\blue{for $1 \leq i \leq N_A-1$}, and $\Lambda_{i,j} = e^{i(\theta_j - \theta_k)}\sqrt{\lambda_j \lambda_k}$, we arrive at the matrix of $H_I$,
\begin{equation}\label{eq:InteractionMatrix}
H_I = \mqty(
\Lambda_{0,0} X_0 & \Lambda_{0,1} X_1 & \Lambda_{0,2} X_2 &\dots & \Lambda_{0,n} X_{n} \\
\Lambda_{1,0} X_1 & \Lambda_{1,1} X_1 & \Lambda_{1,2} X_1 & \dots & \Lambda_{1,n} X_{1} \\
\Lambda_{2,0} X_2 & \Lambda_{2,1} X_1 & \Lambda_{2,2} X_2 & \dots & \Lambda_{2,n} X_{2} \\
\vdots & \vdots & \vdots & \ddots & \vdots \\
\Lambda_{n,0} X_{n} & \Lambda_{n,1} X_{1} & \Lambda_{n,2} X_2 &\dots & \Lambda_{n,n} X_{n}
),
\end{equation}
where $n=N_A-1$.

Using Eqs.~\eqref{eq:CanonicalBasis}, \eqref{eq:UnitaryOperatorGlobal}, \eqref{eq:ProtocolEfficiency}, and \eqref{eq:InteractionHamiltonian}, and setting $E_0=0$, the efficiency of the thermalization protocol for target states like those of Eq.~\eqref{eq:TargetStates} results to be equal to (for more details, see Appendix~\ref{sec:APPEfficiencyUnitaryGlobal}, where the computation is performed for an arbitrary state $\ket{\psi}$) 
\begin{equation}
\label{eq:EfficiencyGlobalUnitary}
\eta = \frac{\sum_{i=1}^{N_A-1} \lambda_i E_i }{\Eexp^S},
\
\Eexp^S =\sum_{i=1}^{N_A-1} \lambda_i E_i \prtq{1+\frac{\lambda_0}{\gamma_i \gamma_{i-1}}}.
\end{equation}
One can also show that the expended energy $\Eexp^S$ cannot be higher than $2 E_{N_A-1}$  (see Appendix~\ref{sec:APPEfficiencyUnitaryGlobal} for the derivation of this inequality in a more general case).

\subsubsection{$H_I$ from the \loc{} unitary operators \label{subsec:HI from the local unitary operators}}

Other two interaction terms can be obtained by the application of the local unitaries $U_A$ and $U_B$ of Sec.~\ref{subsec:LocalUnitaryApproach}, together with the, respective, generalized CNOT gate.
Since the two \loc{} operations have the same structure, we will concentrate on the case of $\tilde{U}_A$.

The operator $U_A$ acts on $\ket{A_i}$ as $U_S$ acts on $\ket{E_i}$, therefore one can write the matrix $U_A H_A U_A^\dagger + H_B$ as [see Eqs.~\eqref{eq:InteractionHamiltonian} and \eqref{eq:InteractionMatrix}] 
\begin{equation}
\sum_{i,j=0}^{N_A-1} \sum_{k=0}^{N_B-1} \prtq{\delta_{i,j} \prt{A_i+B_k} + \Lambda_{i,j}X^A_{i,j}}\dyad{A_i B_k}{A_j B_k},
\end{equation}
where $X^A_{i,j}$ refers to the correspondent $X_\alpha$ of the matrix of Eq.~\eqref{eq:InteractionMatrix} with all the $E_i$ replaced by $A_i$ [the coefficients $\lambda_i$ remain the same, even for the case of Eq.~\eqref{eq:MinimumState}], $\delta_{i,j}$ is the Kronecker \blue{$\delta$}, and
\begin{equation}
\alpha =
\begin{cases}
\max(i,j), &\qq{if} ij=0,\\
\min(i,j), &\qq{if} ij\neq 0.
\end{cases}
\end{equation}
Then, applying the generalized CNOT gate $U_{G_A} $ and subtracting the original Hamiltonian $H_0$, we obtain
\begin{multline}
H_I= \sum_{i,j,k=0}^{N_A-1} \prtbg{\\
\prtq{\delta_{i,j} \prt{B_k-B_{i \oplus k}} + \Lambda_{i,j}X^A_{i,j}} \dyad{A_i B_{i \oplus k}}{A_j B_{j \oplus k}}}\\
+ \sum_{i,j=0}^{N_A-1} \sum_{k=N_A}^{N_B-1} \Lambda_{i,j}X^A_{i,j} \dyad{A_i B_{k}}{A_j B_{k}}.
\end{multline}

Even for this approach, we can calculate the efficiency of the thermalization protocol and the expended energy for target states like those of Eq.~\eqref{eq:TargetStates} (see Appendix~\ref{sec:APPEfficiencyLocalUnitary} for the calculations).
Setting $A_0=B_0=0$, we get
\begin{equation}
\label{eq:EnergyExpenseA}
\eta = \frac{\sum_{i=1}^{N_A-1} \lambda_i E_i }{\Eexp^A},
\
\Eexp^A = \sum_{i=1}^{N_A-1} \prtq{\lambda_i \prt{ E_i + \frac{\lambda_0 A_i}{\gamma_i \gamma_{i-1}}}}.
\end{equation}

Notice that the \loc{} approach based on $\tilde{U}_A$ is always more efficient than the \nonloc{} one when aiming at the same state, since
\begin{equation}
\label{eq:DifferenceLocalGlobal}
\Eexp^S - \Eexp^A = \lambda_0 \sum_{i=1}^{N_A-1} \frac{\lambda_i B_i}{\gamma_i \gamma_{i-1}} \ge 0,
\end{equation}
while the numerators of the efficiencies are the same in the two cases. This conclusion also holds for the \loc{} approach based on $\tilde{U}_B$, for which a formula analogous to the one of Eq.~\eqref{eq:EnergyExpenseA} applies where the quantities $A_i$ and  $B_i$ are swapped.

With regard to which approach is more efficient between the two possible \loc{} approaches, the answer is model dependent as it strongly depends on the spectra of both local Hamiltonians.
In this case, the difference is given by 
\begin{equation}
\label{eq:ExpenseLocalDifference}
\Eexp^B - \Eexp^A = \lambda_0 \sum_{i=1}^{N_A-1} \frac{\lambda_i \prt{B_i-A_i} }{\gamma_i \gamma_{i-1}}.
\end{equation}

As we will see in the example of Sec.~\ref{sec:ApplTherm}, which of the two approaches requires more energy depends also on how much entanglement is demanded.

\section{\label{sec:ApplTherm} Analysis of the thermalization protocols for a three by four system}

The simpler system to which we could apply the thermalization protocol is a system composed of two qubits.
However, the two-level structure makes most results trivial and not meaningful in general so that the analysis of this section is devoted to a bigger system case. For completeness, in Appendix~\ref{sec:TwoQubits} we report all the explicit formulas for a two-qubit system.

In Ref.~\cite{Piccione2020Energy} we analyzed a $3\times 4$ system with spectra $\sigma(H_A)=\prtg{0,2,4}$ and $\sigma(H_B)=\prtg{0,1,6,9}$ in arbitrary units.
In this section, we analyze the thermalization protocol applied to the same system, focusing on the behaviour of the MEESs with respect to all the other states that every specific interaction Hamiltonian we devised could produce.

\begin{figure}
	\centering
	\includegraphics[width=0.48\textwidth]{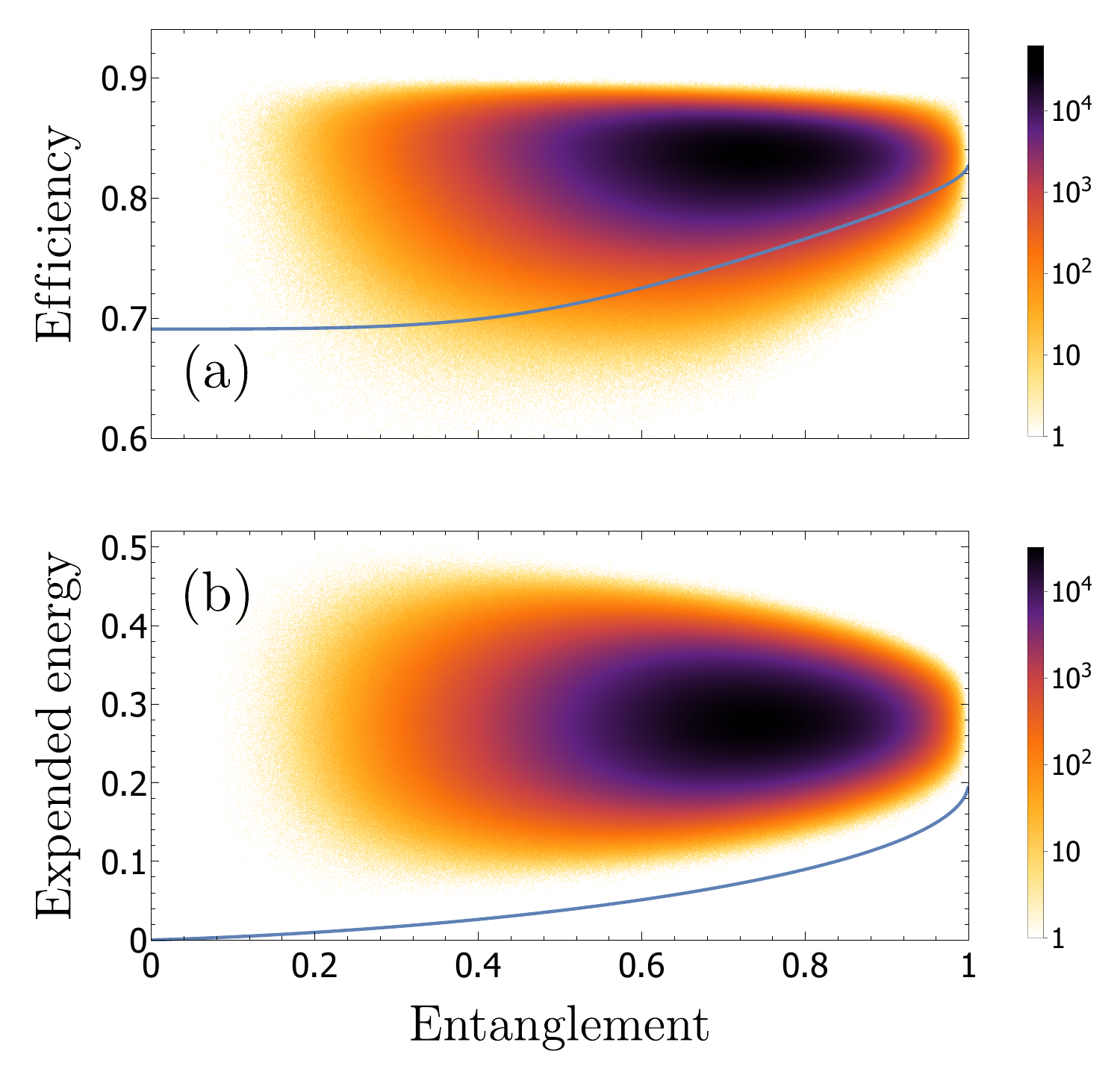}
	\caption{
		Distribution of randomly generated pure states with respect to the entropy of entanglement, $\E$, and the efficiency of the protocol (a) or the amount of energy used to create them (b).
		For these plots, the thermalization protocol is based on the simple approach.
		The relevant Hamiltonians have spectra $\sigma(H_A)=\prtg{0,2,4}$ and $\sigma(H_B)=\prtg{0,1,6,9}$ in arbitrary units.
		Both the entanglement and the expended energy are normalized to $1$, respectively, with respect to $\ln(3)$ and $2\max{\sigma(H_0)}$.
		Each graph is the result of an interpolation of $10^9$ random states distributed in a $1000 \times 1000$ grid, which covers the whole range of values (even those not shown on the graph; e.g., the grid of efficiency goes from zero to one). The colors correspond to $\log_{10}(1+c)$, where $c$ is the number of states in each grid  element. In the bar legend, we report the value of $1+c$, so that, e.g., 1 corresponds to the case of counting equal to zero.	
		The blue lines give the position of the MEESs.
		Regarding the efficiency, we notice that it is quite high for every state and that the MEESs are among the worst ones.
		However, regarding the expended energy, they are much cheaper than the vast majority of all the possible pure states that can be generated.}
	\label{fig:ComparisonSimpleGlobal}
\end{figure}

Let us start from analyzing the simple case interaction of Eq.~\eqref{eq:SimpleGlobalInteractionHamiltonian}.
Since in Eq.~\eqref{eq:AlternativeHamiltonian} $\stMin$ can be replaced by any state of system $S$, any state can be obtained through a zero-temperature thermalization associated to $\Hsi_S$.
Figure~\ref{fig:ComparisonSimpleGlobal} shows two plots relating the entanglement of $10^9$ randomly generated states to, respectively, their generation efficiency $\eta$ and their energy cost $\Eexp$.
The MEESs, represented by the blue lines, are not the best ones with respect to the efficiency, but, in general, they cost much less than the vast majority of the all the possible pure states that can be generated.
In effect, even if they are not the states mathematically minimizing the expended energy, numerical solutions of the minimization problem show, in all systems that we have simulated, that they are very close to them.

\begin{figure}
	\centering
	\includegraphics[width=0.48\textwidth]{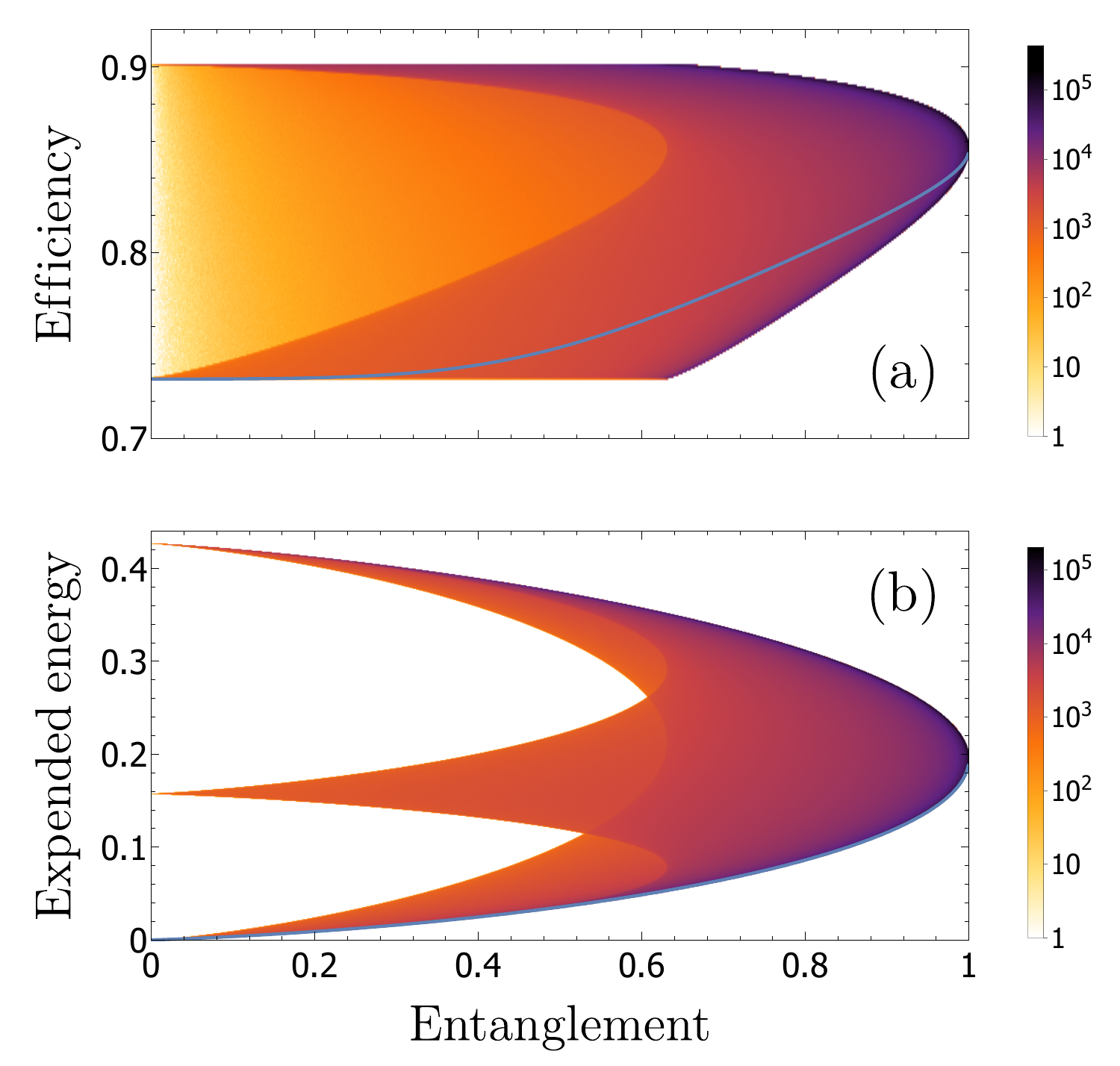}
	\caption{\blue{The} same plots of Fig.~\ref{fig:ComparisonSimpleGlobal} but with the thermalization protocol based on the modified simple approach.
	Here, we have used a different set of states with respect to the case of Fig.~\ref{fig:ComparisonSimpleGlobal}, since the modified simple approach cannot generate all the states but only those having the form of Eq.~\eqref{eq:TargetStates}. In particular, we have randomly generated $10^9$ states of this latter kind.
	As in Fig.~\ref{fig:ComparisonSimpleGlobal}, the efficiency is quite high for every state, with the MEESs being among the worst ones but also costing less than the vast majority of all the possible pure states that can be obtained.}
	\label{fig:ComparisonSimpleLocal}
\end{figure}

With the interaction Hamiltonian of the modified simple approach of Eq.~\eqref{eq:SimpleInteractionHamiltonianModified}, since $V_M < V_S$ [see Eq.~\eqref{eq:EstimationOfV}], we get an higher efficiency compared to the one we could get by reaching the same states with the \blue{nonmodified} simple approach. 
However, the modified simple approach lets us to obtain only states like those of Eq.~\eqref{eq:TargetStates}.
Figure~\ref{fig:ComparisonSimpleLocal} shows the same two plots of Fig.~\ref{fig:ComparisonSimpleGlobal} but using a sample of $10^9$ randomly generated states that can be obtained through the modified simple approach.
Indeed, the difference of the sample set is the main responsible for the great difference between the distributions of Fig.~\ref{fig:ComparisonSimpleGlobal} and Fig.~\ref{fig:ComparisonSimpleLocal}.
Even in this case, the MEESs, represented by the blue lines, are not the best ones with respect to the efficiency.
Regarding the expended energy, they are still among the cheapest states, but the distribution accumulates on the boundaries contrarily to the case of Fig.~\ref{fig:ComparisonSimpleGlobal}(b).
By zooming enough in the graph, it is possible to see (using a thinner line for the MEESs) that some of the generated states are cheaper than the MEESs.
We recall that, in this case only, we were able to find a simple analytical solution of the minimization problem, reported in Appendix~\ref{sec:APPMinimizationSimple}.
As in the simple approach case of Fig.~\ref{fig:ComparisonSimpleGlobal}, the solutions of the minimization problem show, in all systems that we have simulated, that MEESs are very close to the states minimizing the expended energy.
We also notice that the maximum value of the efficiency with respect to the normalized entanglement starts to change around $\ln(2)/\ln(3) \simeq 0.631$.
This is probably due to the fact that to obtain a certain degree of entanglement without involving all three states ($\ket{E_0},\ \ket{E_1}$, and $\ket{E_2}$) is not possible when $\E > \ln(2)$.

\begin{figure}[t!]
	\centering
	\includegraphics[width=0.48\textwidth]{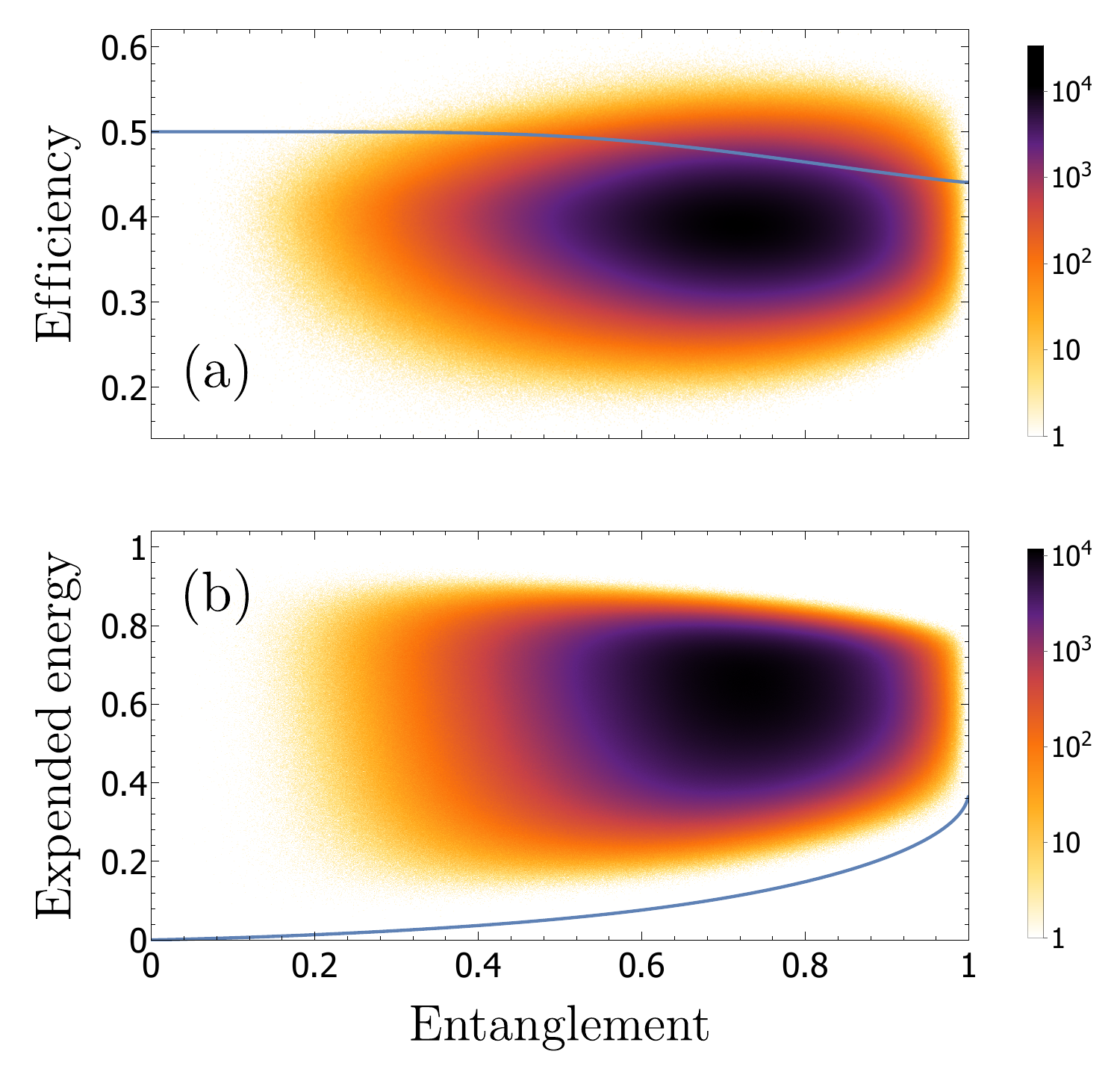}
	\caption{\blue{The same} plots of Fig.~\ref{fig:ComparisonSimpleGlobal} but with the thermalization protocol based on the transformation of $H_0$ through $U_S$.
	Here, the same states of Fig.~\ref{fig:ComparisonSimpleGlobal} have been used.
	Compared to Fig~\ref{fig:ComparisonSimpleGlobal}, the efficiency is remarkably lower, but in this case the MEESs are among the best ones.
	As in all the other figures, they are cheaper than the vast majority of all the possible pure states that can be generated.}
	\label{fig:ComparisonUnitaryGlobal}
\end{figure}

Moving to the results of the unitary approach, Fig.~\ref{fig:ComparisonUnitaryGlobal} shows the same two plots of Fig.~\ref{fig:ComparisonSimpleGlobal} but in the case when the interaction Hamiltonian is obtained through $U_S$.
Notice how the state distribution of Fig.~\ref{fig:ComparisonUnitaryGlobal} is qualitatively similar to that of Fig.~\ref{fig:ComparisonSimpleGlobal}. This is due to the fact that the plots in the two figures have been made based on the same kind of sample set.
In general, a comparison with all the other figures of this section suggests that the \nonloc{} unitary approach is the less performing.

Figure~\ref{fig:ComparisonLocalA} shows the same plots of previous figures in the case when the interaction is obtained through $\tilde{U}_A$.
As for the modified simple approach, only states with a Schmidt decomposition in the $\ket{E_i}$ basis can be obtained through this approach [see Eq.~\eqref{eq:TargetStates}].
Indeed, again for this reason, the distribution is completely different with respect to approaches that can give any state as result.
In general, the efficiencies are higher compared to the ones obtained with the \nonloc{} approach.
This is probably connected to the fact, proved analytically, that when the \loc{} and \nonloc{} unitary approaches generate the same state, the \loc{} ones are more efficient [see Eq.~\eqref{eq:DifferenceLocalGlobal}]. However, even in this case, the MEESs are not among the best ones but they lie very near the lower border of the $\Eexp-\E$ distribution.

\begin{figure}[t!]
	\centering
	\includegraphics[width=0.48\textwidth]{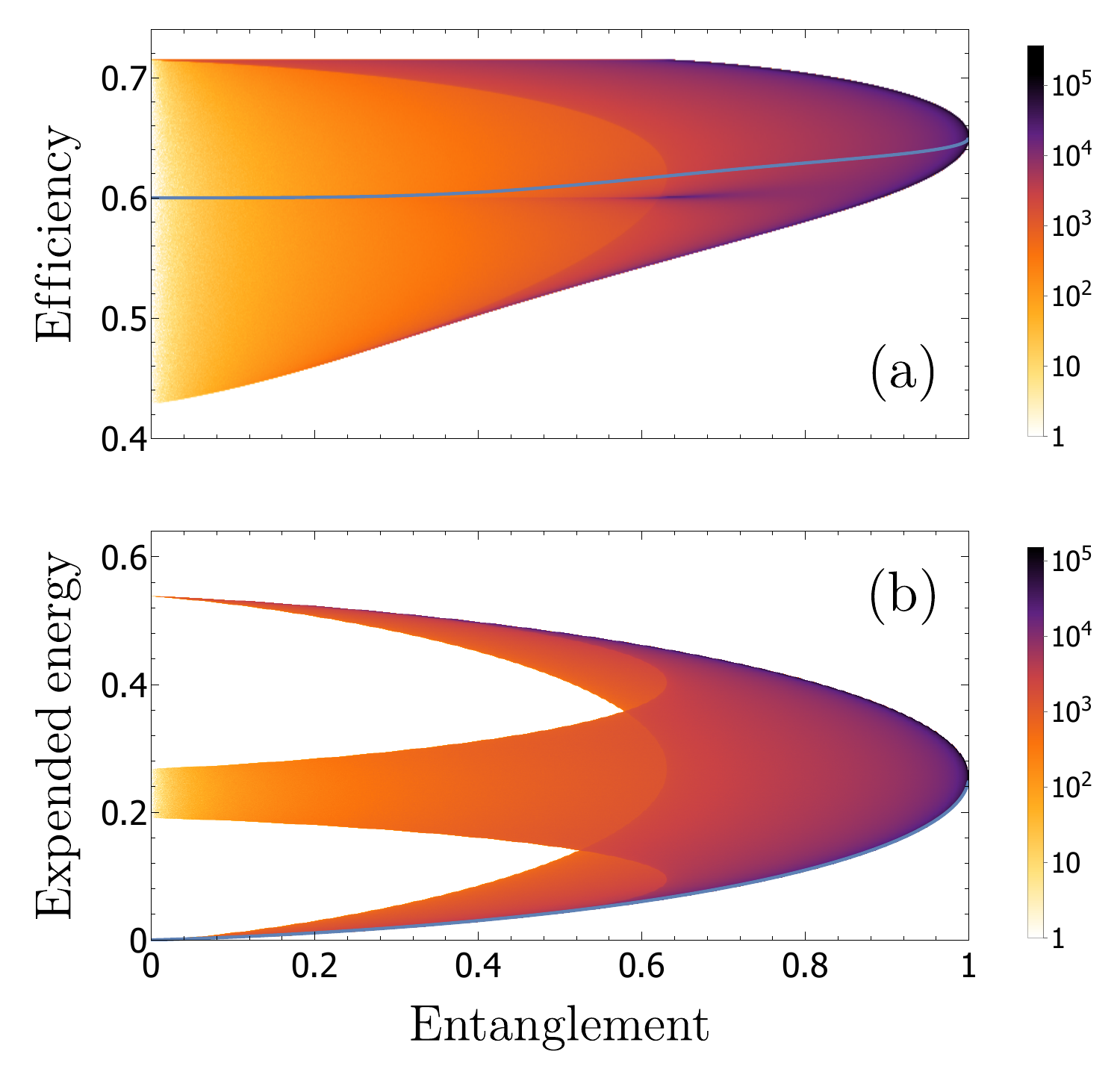}
	\caption{\blue{The same} plots of Fig.~\ref{fig:ComparisonSimpleGlobal} but with the thermalization protocol based on the transformation of $H_0$ through $\tilde{U}_A$. Here, the same states of Fig.~\ref{fig:ComparisonSimpleLocal} have been used. With respect to the unitary \nonloc{} approach of Fig.~\ref{fig:ComparisonUnitaryGlobal}, the efficiency is generally higher. The MEESs result to be cheaper to generate  than the vast majority of all the possible pure states that can be obtained. The striking difference in the distribution comes from the difference of the sample set of states.}
	\label{fig:ComparisonLocalA}
\end{figure}

\begin{figure}[h]
	\centering
	\includegraphics[width=0.48\textwidth]{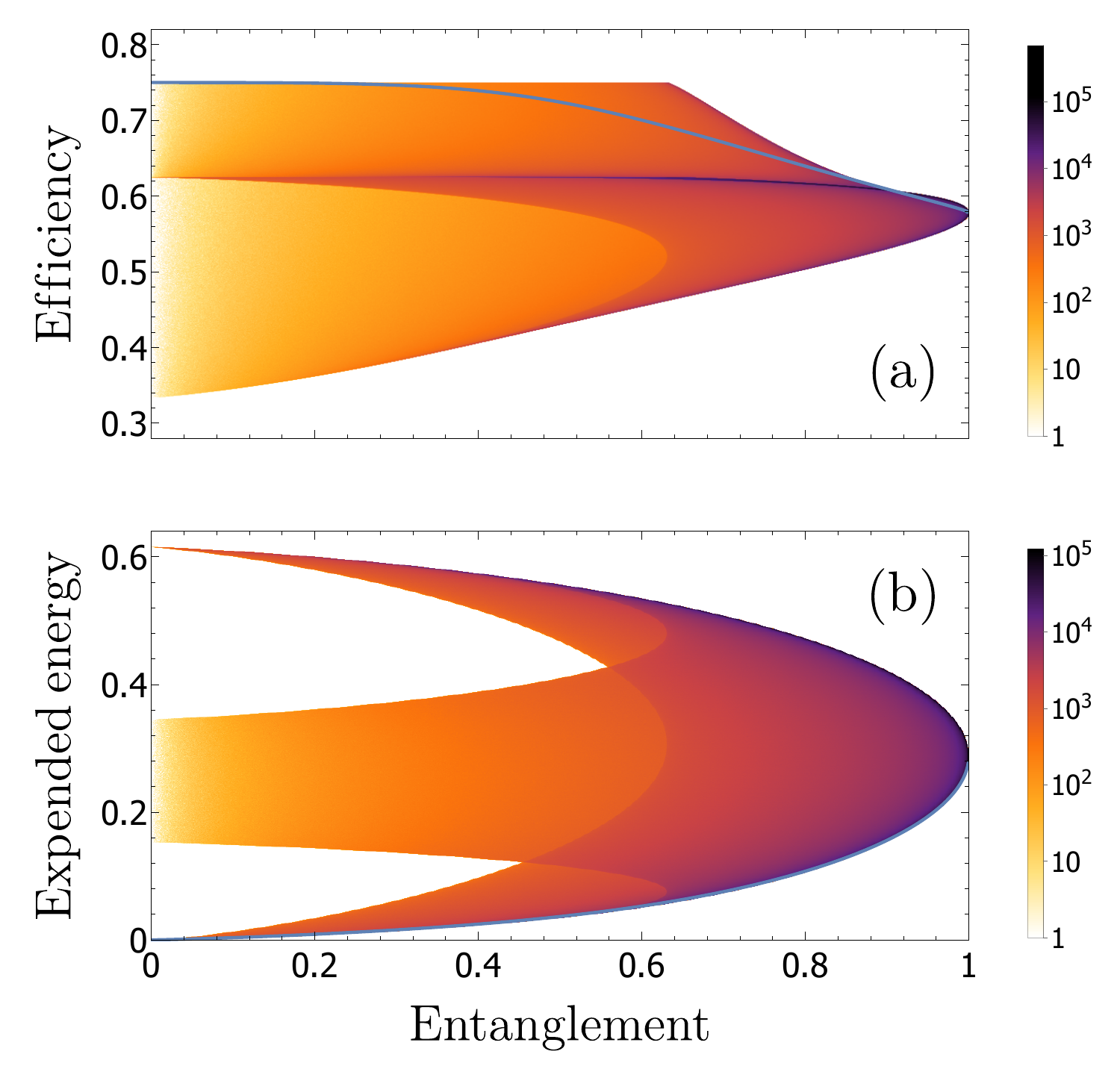}
	\caption{\blue{The same} plots of Fig.~\ref{fig:ComparisonSimpleGlobal} but with the thermalization protocol based on the transformation of $H_0$ through $\tilde{U}_B$. Here, the same states of Fig.~\ref{fig:ComparisonSimpleLocal} have been used.	
		This is the only case in which MEESs practically attain the maximum efficiency, even if only for the low entanglement zone. Concerning the expended energy, the MEESs are cheaper to generate than the vast majority of the all the possible pure states that can be generated.}
	\label{fig:ComparisonLocalB}
\end{figure}

The same plots for the unitary approach on system $B$, reported in Fig.~\ref{fig:ComparisonLocalB}, give different results for the efficiency but similar for the expended energy. We can deal with this case by using directly the operator $\tilde{U}_B$ derived in Sec.~\ref{subsec:LocalUnitaryApproach}, even if we did not report explicitly the interaction Hamiltonian needed for implementing the thermalization protocol in Sec.~\ref{subsec:HI from the local unitary operators}.
In this case, for low entanglement, the efficiencies of the MEESs are practically the highest ones. 
Other simulations suggest that this probably depends on the fact that between the two subsystems $A$ and $B$, system $B$ is the one with the lowest energy gap between the ground and the first excited state and the highest energy gap between the first and the second excited states.
In fact, within the \loc{} unitary approaches, the energy spectrum of the local system of interest is more important than the energy spectrum of the other one [see Eq.~\eqref{eq:EnergyExpenseA}].
Since for low entanglement, two states in the decomposition are sufficient, we can (heuristically) expect that the weight of the third state has to be marginal.
Indeed, considering the spectrum of system $B$ and the weights implied by a thermal distribution, the second excited state in MEESs is very low populated for low entanglement, thus leading to high efficiencies.
As for the modified simple approach, we can also see in Fig.~\ref{fig:ComparisonLocalA} and \ref{fig:ComparisonLocalB} that the maximum of the efficiency with respect to the entanglement starts declining roughly at $\ln(2)/\ln(3)$, we think, for the same reasons given in the case of Fig.~\ref{fig:ComparisonSimpleLocal}.

Finally, Fig.~\ref{fig:ComparisonMinima} shows both efficiency and expended energy for all of the analyzed processes but focusing only on the generation of MEESs.
Here, we can see that at low entanglement values the most efficient approach to produce a certain MEES is the \loc{} unitary approach \blue{based on $\tilde{U}_B$.}
For higher entanglement values, the best approach seems to be given by the modified simple one.
The plots also make clear that the two simple approaches behave the same for MEESs, with the difference originating only from $V_S > V_M$.
We also notice that the two \loc{} approaches cross at a certain point.
This is probably due to the fact the, towards higher entanglement values, the second excited level of both subsystems becomes more important.
Therefore, the only positive term in the sum of Eq.~\eqref{eq:ExpenseLocalDifference}, i.e., the last one becomes higher.
As predicted analytically, the \nonloc{} unitary approach always performs worse than the \loc{} ones.

\begin{figure}[t]
	\centering
	\includegraphics[width=0.48\textwidth]{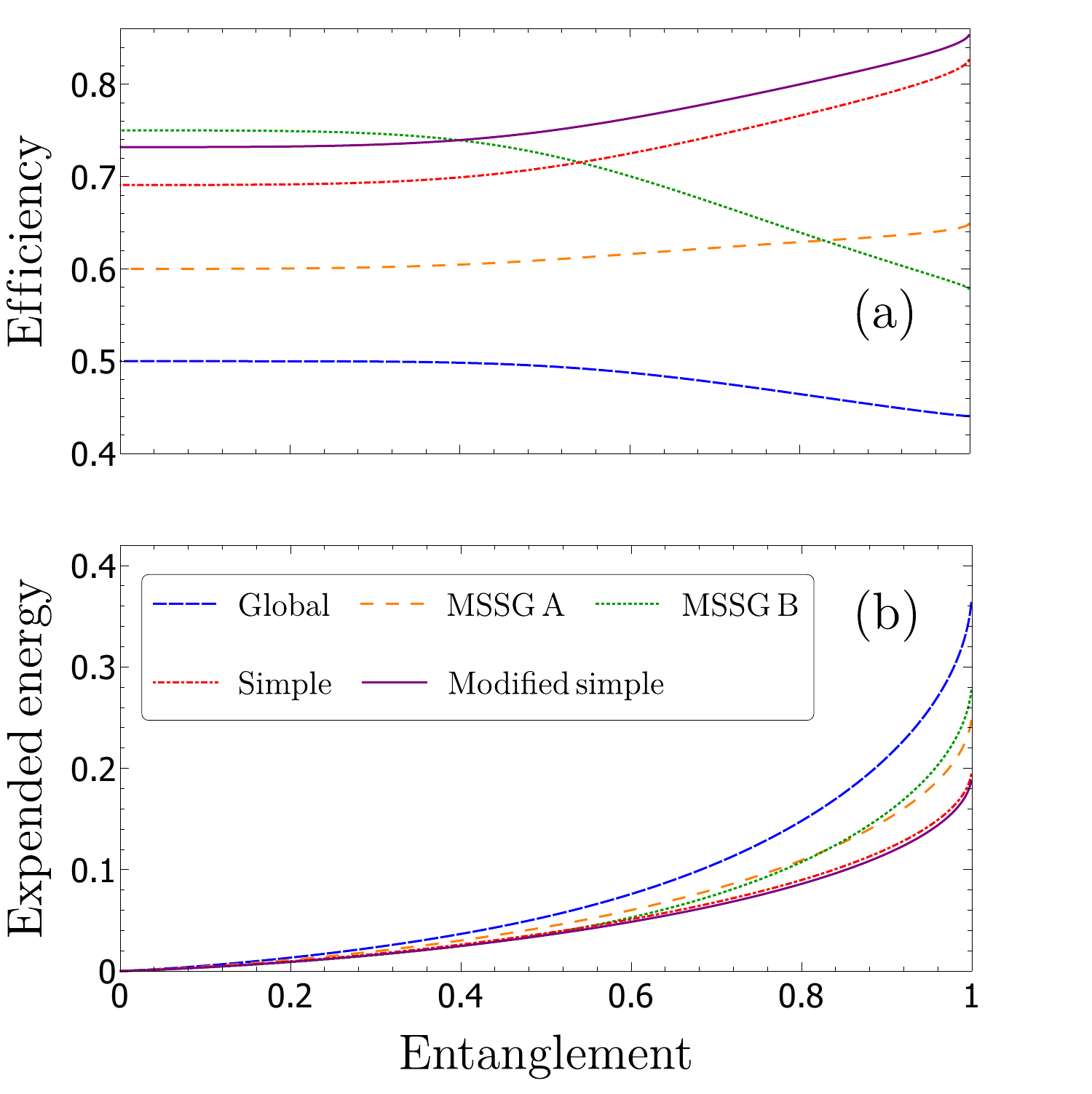}
	\caption{
		Comparison of the five approaches analyzed in Sec.~\ref{sec:ApplTherm}.
		For both plots, entanglement, quantified by the entropy of entanglement normalized to 1, is reported on the $x$-axis, while on the $y$-axis are shown, respectively, efficiency (a) or expended energy (b).
		Depending on how much entanglement is required, the most efficient states to generate can be obtained through the \blue{MSSG unitary approach based on $\tilde{U}_B$} or the modified simple one.
		We also observe that, as predicted analytically, the modified simple approach always performs better than the \blue{nonmodified} one and that the unitary approach based on $U_S$ is worse than the other two for any value of the entanglement.
	}
	\label{fig:ComparisonMinima}
\end{figure}

\section{\label{sec:Conclusions}Conclusions}

In this paper we have proposed several protocols to generate MEESs, i.e., states having the minimum energy for any given amount of entanglement, in a bipartite system $S$ made by two \blue{non}interacting parts of arbitrary finite dimensions. Some of these protocols are based on the direct use of unitary operators, while the others exploit a zero-temperature thermalization for their realization. 
	
Firstly, we have provided three different unitary operators connecting the ground state of $S$ to an arbitrary MEES. While the first operator, $U_S$, can be decomposed as the product of \blue{non}local operators, the others, $\tilde{U}_A$ and $\tilde{U}_B$, can be decomposed as the product of simple two-level local rotations and a local change of phase on a subsystem, and, subsequently, a generalized CNOT gate on the bipartite system.
Therefore, the implementation of these unitary transformations in quantum circuits should be easier with respect to the case of $U_S$.

Secondly, we have identified five different interaction Hamiltonians that, added to the free ones, make possible to generate MEESs by means of a zero-temperature thermalization process. Two of these processes are based on what we have called a simple and a modified simple approach, \blue{whereas} the others are based on the unitary operators previously identified,  $U_S$, $\tilde{U}_A$, and $\tilde{U}_B$.

We have then compared the efficiency of these zero-temperature generation processes as well as the expended energy to run them, both in general and in the case of  a specific $3 \times 4$ system, complex enough to unveil the difference between the various protocols. In doing so, we have exploited the relevant fact that our protocols can be also applied to generate states different from the MEESs. In particular, some methods can be used to reach any state of system $S$ (with proper modifications when needed), \blue{whereas} others can only give states the Schmidt decomposition of which is in the basis $\ket{E_i}$ [cfr. Eq.~\eqref{eq:TargetStates}]. By means of a detailed comparison, we have identified, in the specific $3 \times 4$ system, which are the best-performing protocols. Depending on how much entanglement is required and the spectra of the systems, the better approaches are, in general, one between those based on $\tilde{U}_A$ and $\tilde{U}_B$, or the modified simple one. Even if the MEESs are not the states minimizing the expended energy to run the generation protocol, we have numerically found, in all systems that we have simulated, that they are very close to them, so that, they are, in general, cheaper to generate than the vast majority of other states. Concerning an experimental implementation of the corresponding interaction Hamiltonians, one can easily find the matrix for all five approaches and realize which is the easiest one to implement.

\appendix

\section{\label{sec:APPUnitaryOperatorConstruction}Construction of the unitary operator in the general case}

\begin{theorem}
\label{th:OrthogonalBasis}
Be $\prtg{\ket{e_j}}_{j=0}^{N-1}$ a basis for an $N-$dimensional Hilbert space and be $\ket{\psi_0} = \sum_{j=0}^{N-1} e^{i \theta_j}\sqrt{\lambda_j} \ket{e_j}$ an arbitrary normalized state with $\lambda_0 > 0$, $\lambda_j \geq 0 \ \forall j \geq 1$, and $\sum_j \lambda_j = 1$. The following states, together with $\ket{\psi_0}$, form another basis:
\begin{multline}
\label{eq:APPCanonicalBasis}
\ket{\psi_k} =
\prtb{\gamma_{k}\ket{e_k} -e^{i(\theta_0 -\theta_k)} \sqrt{\lambda_0 \lambda_k} \ket{e_0} +
\\- \sum_{j=k+1}^{N-1} e^{i(\theta_j -\theta_k)}  \sqrt{\lambda_k \lambda_j} \ket{e_j}}/
{\sqrt{\gamma_k \gamma_{k-1}}},
\end{multline}
where $1\leq k \leq N-1$, $\gamma_k = 1 - \sum_{j=1}^{k} \lambda_j  = \lambda_0 + \sum_{j=k+1}^{N-1} \lambda_j$, and $\gamma_0 = 1$.
\end{theorem}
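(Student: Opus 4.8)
\emph{Strategy.} Since the list contains exactly $N$ vectors living in an $N$-dimensional space, it suffices to prove that $\{\ket{\psi_0},\ket{\psi_1},\dots,\ket{\psi_{N-1}}\}$ is an orthonormal set: an orthonormal set is linearly independent, and $N$ linearly independent vectors in an $N$-dimensional space form a basis, so spanning comes for free. (These $\ket{\psi_k}$ are in fact nothing but the vectors produced by Gram--Schmidt orthonormalization of the ordered list $(\ket{\psi_0},\ket{e_1},\dots,\ket{e_{N-1}})$, so orthonormality is exactly what one should expect; but the quickest route is to verify it directly.) Before anything else I would record that every normalization denominator is strictly positive: from the alternative form $\gamma_k=\lambda_0+\sum_{i=k+1}^{N-1}\lambda_i$ already given in the statement one reads off $\gamma_k\ge\lambda_0>0$ for all $0\le k\le N-1$, so each $\ket{\psi_k}$ is well defined. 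Note also that $\ket{\psi_0}$ is already normalized by hypothesis, since $\sum_i\lambda_i=1$.

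\emph{The three calculations.} Writing $c_i=e^{i\theta_i}\sqrt{\lambda_i}$ so that $\ket{\psi_0}=\sum_i c_i\ket{e_i}$, I denote by $\ket{u_k}=\sqrt{\gamma_k\gamma_{k-1}}\,\ket{\psi_k}$ the unnormalized numerator of Eq.~\eqref{eq:APPCanonicalBasis}. The single identity driving every cancellation is $\gamma_k=\lambda_0+\sum_{i=k+1}^{N-1}\lambda_i$. \emph{(i) Norm.} Because $\ket{e_0},\ket{e_k},\ket{e_{k+1}},\dots$ are orthonormal, collecting the squared moduli of the coefficients of $\ket{u_k}$ gives $\braket{u_k}{u_k}=\gamma_k^2+\lambda_k\bigl(\lambda_0+\sum_{i=k+1}^{N-1}\lambda_i\bigr)=\gamma_k^2+\lambda_k\gamma_k=\gamma_k(\gamma_k+\lambda_k)=\gamma_k\gamma_{k-1}$, which matches the denominator, so $\braket{\psi_k}{\psi_k}=1$. \emph{(ii) Orthogonality to $\ket{\psi_0}$.} Expanding $\braket{\psi_0}{u_k}=\sum_i \bar c_i\braket{e_i}{u_k}$ and factoring $e^{-i\theta_k}\sqrt{\lambda_k}$ out of every surviving term leaves the bracket $\gamma_k-\lambda_0-\sum_{i=k+1}^{N-1}\lambda_i$, which vanishes by the identity.

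\emph{(iii) Mutual orthogonality} of $\ket{\psi_k}$ and $\ket{\psi_l}$ for $1\le l<k$. Rather than a brute-force overlap I would exploit a nested-subspace structure. Put $V_m=\mathrm{span}\{\ket{\psi_0},\ket{e_1},\dots,\ket{e_m}\}$. Substituting $c_0\ket{e_0}=\ket{\psi_0}-\sum_{i\ge1}c_i\ket{e_i}$ into $\ket{u_k}$ and using $\gamma_k+\lambda_k=\gamma_{k-1}$ recasts it as
\[
\ket{u_k}=\gamma_{k-1}\ket{e_k}-e^{-i\theta_k}\sqrt{\lambda_k}\,\ket{\psi_0}+e^{-i\theta_k}\sqrt{\lambda_k}\sum_{i=1}^{k-1}c_i\ket{e_i},
\]
which exhibits $\ket{\psi_k}\in V_k$. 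On the other hand, in its original form $\ket{u_k}$ has no component on $\ket{e_1},\dots,\ket{e_{k-1}}$, so $\braket{e_j}{u_k}=0$ for $1\le j\le k-1$; combined with step (ii) this shows $\ket{\psi_k}\perp V_{k-1}$. Since $\ket{\psi_l}\in V_l\subseteq V_{k-1}$ whenever $l<k$, orthogonality follows immediately, and the argument of course also covers $l=0$. This establishes orthonormality of all $N$ vectors and hence the claim.

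\emph{Main obstacle.} There is no conceptual difficulty here; the theorem is essentially a bookkeeping exercise. The one thing that must be handled carefully is the interplay of the phases $e^{i(\theta_i-\theta_k)}$ with the index ranges, because the supports of the various $\ket{u_k}$ overlap in a $k$-dependent way and a careless direct computation of $\braket{\psi_l}{\psi_k}$ splinters into several cases. Routing step (iii) through the nested spaces $V_m$ is precisely what tames this: it replaces a doubly-indexed family of overlaps by the single statement $\ket{\psi_k}\perp V_{k-1}$, at the modest cost of the one rewriting of $\ket{u_k}$ displayed above. Since all the cancellations are powered by $\gamma_k=\lambda_0+\sum_{i=k+1}^{N-1}\lambda_i$, I would isolate and verify that identity at the very start and invoke it uniformly thereafter.
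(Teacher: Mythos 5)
Your proof is correct, but its key step takes a different route from the paper's. The paper proves mutual orthogonality among the $\ket{\psi_k}$ with $k\geq 1$ head-on: for $n>k\geq 1$ it expands $\ip{\psi_n}{\psi_k}$ in the $\ket{e_i}$ basis and notes that the three surviving contributions (from $\ket{e_0}$, from $\ket{e_n}$, and from the $\ket{e_i}$ with $i>n$) collapse to a multiple of $\lambda_0-\gamma_n+\sum_{i=n+1}^{N-1}\lambda_i=0$, i.e.\ the same identity you isolate at the outset; the case $n<k$ follows by conjugation, and the paper's norm computation and its (asserted) $\ip{\psi_0}{\psi_k}=0$ computation are exactly your steps (i) and (ii). You instead replace the pairwise overlap computation by the nested-subspace argument: $\ket{\psi_k}\in V_k$ via your rewriting of $\ket{u_k}$ (which is algebraically correct --- the coefficient of $\ket{e_k}$ becomes $\gamma_k+\lambda_k=\gamma_{k-1}$ after substituting for $c_0\ket{e_0}$), while $\ket{\psi_k}\perp V_{k-1}$ because the original form of $\ket{u_k}$ has no support on $\ket{e_1},\dots,\ket{e_{k-1}}$ and step (ii) handles $\ket{\psi_0}$; orthogonality then cascades down the flag $V_0\subset V_1\subset\cdots$. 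What the paper's route buys is brevity: the direct overlap turns out to be a one-line cancellation, so no rewriting of $\ket{u_k}$ is needed. What your route buys is structure: it makes explicit that the construction is Gram--Schmidt applied to $\prtg{\ket{\psi_0},\ket{e_1},\dots,\ket{e_{N-1}}}$ and it avoids any case analysis on index order. Two small additions of yours are worth keeping: the observation $\gamma_k\geq\lambda_0>0$, which guarantees every denominator is nonzero (the paper leaves this implicit, and it is precisely where the hypothesis $\lambda_0>0$ enters), and the explicit remark that $N$ orthonormal vectors in an $N$-dimensional space automatically span.
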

\begin{proof}
First we prove that any two states of the set are orthogonal.
We start by considering $n > k \geq 1$:
\begin{multline}
\ip{\psi_n}{\psi_k} \propto  e^{-i \prt{\theta_k - \theta_n}} \sqrt{\lambda_k \lambda_n} \prt{ \lambda_0
-\gamma_n +\sum_{i=n+1}^{N-1} \lambda_i }
\\=
e^{- i \prt{\theta_k - \theta_n}} \sqrt{\lambda_k \lambda_n} \prt{ \lambda_0
-\gamma_n + \gamma_n - \lambda_0 }=0.
\end{multline}
Indeed, the same holds for $ 1 \leq n < k$.
In the same manner, it is easy to show that $\ip{\psi_0}{\psi_k}=0 \ \forall k$.
Lastly, we prove that the states are normalized:
\begin{multline}
\gamma_k \gamma_{k-1} \ip{\psi_k}
=
\gamma_k^2 + \lambda_k \prt{\lambda_0 + \sum_{i=k+1}^{N-1}\lambda_i}
\\=
\gamma_k^2 + \lambda_k \gamma_k
=
\gamma_k \prt{\gamma_k + \lambda_k}
=
\gamma_k \gamma_{k-1}.
\end{multline}
\end{proof}
We remark that when $\lambda_k=0$, $\ket{\psi_k}=\ket{e_k}$.

Theorem~\ref{th:OrthogonalBasis} lets us write down a unitary operator connecting two arbitrary states.
Suppose we want to obtain state $\ket{\psi'}$ from state $\ket{\psi}$.
As first step, we define the operator
\begin{equation}
\label{eq:APPUnitaryOperatorGeneral}
U (\ket{\psi}) = \sum_{i=0}^{N-1} \dyad{\psi_i}{e_i},
\end{equation}
which maps the state $\ket{e_0}$ to $\ket{\psi_0}=\ket{\psi}$.
If $\lambda_0 = 0$, the state $\ket{\psi_{N-1}}$ (and possibly others) cannot be properly defined since, e.g., $\gamma_{N-1}=0$. However, we can write the target state with respect to the basis $\{\ket{\tilde{e}_i}\}_{i=0}^{N-1}$ in which $\ket{\tilde{e}_{i^*}} = \ket{e_0}$ and $\ket{\tilde{e}_{0}} = \ket{e_{i^*}}$, where $i^*$ is any index value such that $\lambda_{i^*} > 0$, while for $i\neq 0, \, i^*$, $\ket{\tilde{e}_i}=\ket{e_i}$. Accordingly, we write the states of Theorem~\ref{th:OrthogonalBasis} and the corresponding new $\tilde{U} (\ket{\psi})$ with respect to this new basis. Then, before $\tilde{U} (\ket{\psi})$, we apply another unitary operator, $U_{\mathrm{SWAP}}$, which swaps the states $\ket{e_0}$ and $\ket{e_{i^*}}$. In this way, when $\lambda_0=0$, we can define $U (\ket{\psi})$ as  $U (\ket{\psi}) =\tilde{U} (\ket{\psi})U_{\mathrm{SWAP}} $, thus assuring the transition $\ket{e_0} \rightarrow \ket{\psi}$.

By applying the same reasoning to $\ket{\psi'}$, i.e., $\ket{\psi'} = U (\ket{\psi'}) \ket{e_0}$, with the proper modification if for the state $\ket{\psi'}$, $\lambda_0 = 0$, one gets
\begin{equation}
\ket{\psi'} = U (\ket{\psi'}) U^\dagger (\ket{\psi}) \ket{\psi}. 
\end{equation}

The  transition from $\ket{e_0}$ to $\ket{\psi} $ can be obtained in another way, i.e., by exploiting the composition of elementary two-dimensional rotations.
We can write
\begin{equation}
\label{APPeq:UnitaryR}
U_R \prt{\ket{\psi}} \equiv \prod_{j=1}^{N} U_j \prt{\ket{\psi}},
\end{equation}
where, for $1 \leq j < N$ (hereafter we drop the argument $\ket{\psi}$ to lighten the notation),
\begin{multline}
U_j = \sqrt{\frac{\gamma_j}{\gamma_{j-1}}} \prt{\dyad{e_0}+\dyad{e_j}} +\sum_{k \neq 0,j} \dyad{e_k} \\
+\sqrt{\frac{\lambda_j}{\gamma_{j-1}}} \prt{e^{i \theta_j}\dyad{e_j}{e_0}- e^{- i \theta_j}\dyad{e_0}{e_j}},
\end{multline}
while, for $j=N$,
\begin{equation}
U_N = e^{i \theta_0} \dyad{e_0} + \sum_{k=1}^{N-1} \dyad{e_k}.
\end{equation}
To compute the action of $U_R$ we first compute the action of its constituents ($1 \leq  j < N$) for $k=0$ and $k>0$:
\begin{equations}
U_j \ket{e_0} &= \sqrt{\frac{\gamma_j}{\gamma_{j-1}}}  \ket{e_0} + e^{i \theta_j}\sqrt{\frac{\lambda_j}{\gamma_{j-1}}} \ket{e_j},\\
U_j \ket{e_k} &= 
\begin{cases}
\sqrt{\frac{\gamma_k}{\gamma_{k-1}}}  \ket{e_k} -e^{-i \theta_k}\sqrt{\frac{\lambda_k}{\gamma_{k-1}}} \ket{e_0}, &\quad j=k,\\
\ket{e_k},&\quad j \neq k.
\end{cases}
\end{equations}

Let us start by studying the action of $U_R$ on the state $\ket{e_0}$. By applying the first $n-1$ matrices $U_j$ [i.e., with $j$ going from $1$ to $n-1$ in Eq.~\eqref{APPeq:UnitaryR}], the amplitude of $\ket{e_0}$ goes to $\sqrt{\gamma_{n-1}}$ while that of $\ket{e_n}$ remains zero. Then, the application of the \blue{matrix $U_n$} makes the amplitude of $\ket{e_0}$ go to $\sqrt{\gamma_n}$ and that of $\ket{e_n}$ from zero to 
\begin{equation}
\sqrt{\gamma_{n-1}} e^{i \theta_n } \sqrt{\frac{\lambda_n}{\gamma_{n-1}}} = e^{i \theta_n} \sqrt{\lambda_n},
\end{equation}
which is the final amplitude of $\ket{e_n}$ in the state $\ket{\psi}$, and will remain unvaried by the action of the subsequent unitaries $U_j$ with $j > n$.
The application of all unitaries but $U_N$ brings the amplitude of $\ket{e_0}$ to $\sqrt{\gamma_{N-1}} = \sqrt{\lambda_0}$.
Therefore, since $U_N$ correctly changes the phase of $\ket{e_0}$, it follows that $U_R \ket{e_0} = \ket{\psi}$.

We can also calculate $U_R \ket{e_k}$. 
First, we notice that
\begin{equation}
U_R \ket{e_k} = \prod_{j = k}^{N} U_j \ket{e_k},
\end{equation}
since the application of the first $k-1$ matrices make no effect.
Then, 
\begin{multline}
\prod_{j =k}^{N-1} U_j \ket{e_k} \\=
\prod_{j =k+1}^{N-1} U_j \prt{
\sqrt{\frac{\gamma_k}{\gamma_{k-1}}}  \ket{e_k} -e^{- i \theta_j}\sqrt{\frac{\lambda_k}{\gamma_{k-1}}} \ket{e_0}} \\=
\sqrt{\frac{\gamma_k}{\gamma_{k-1}}}\ket{e_k} -e^{ - i\theta_k}\sqrt{\frac{\lambda_k}{\gamma_{k-1}}}  \prod_{j =k+1}^{N-1} U_j \ket{e_0}.
\end{multline}
This last term can be computed too:
\begin{multline}
\prod_{j =k+1}^{N-1} U_j \ket{e_0} \\= \prod_{j=k+2}^{N-1} U_j \prt{\sqrt{\frac{\gamma_{k+1}}{\gamma_{k}}}  \ket{e_0} + e^{i \theta_{k+1}}\sqrt{\frac{\lambda_{k+1}}{\gamma_{k}}} \ket{e_{k+1}}} \\=
\sqrt{\frac{\gamma_{N-1}}{\gamma_k}} \ket{e_0} + \sum_{j=k+1}^{N-1} e^{i \theta_{j}}\sqrt{\frac{\lambda_{j}}{\gamma_{k}}} \ket{e_{j}}.
\end{multline}
Eventually, we get (also using $\gamma_{N-1} = \lambda_0$)
\begin{multline}
U_R \ket{e_k} =
\sqrt{\frac{\gamma_k}{\gamma_{k-1}}}\ket{e_k} -e^{i \prt{\theta_0 - \theta_k}}\sqrt{\frac{\lambda_0 \lambda_k}{\gamma_k \gamma_{k-1}}} \ket{e_0} +
\\
- \sum_{j=k+1}^{N-1} e^{ i \prt{\theta_j - \theta_{k}}}\sqrt{\frac{\lambda_k \lambda_{j}}{\gamma_{k}\gamma_{k-1}}} \ket{e_{j}},
\end{multline}
which is equal to Eq.~\eqref{eq:APPCanonicalBasis}.
Therefore, we have just shown how the matrix $U\prt{\ket{\psi}}$ is explicitly decomposable as a product of elementary two-dimensional rotations, i.e., $U \prt{\ket{\psi}}=U_R \prt{\ket{\psi}}$.

\section{\label{sec:APPMinimizationSimple}Minimization of the expended energy in the modified simple thermalization protocol}

Here, we show how to find the states minimizing $\Eexp$ for a fixed entanglement $\E$ in the modified simple approach of Sec.~\ref{subsec:SimpleApproach} using the interaction Hamiltonian of Eq.~\eqref{eq:SimpleInteractionHamiltonianModified}.
Since this Hamiltonian can only generate states belonging to the subspace generated by the kets $\ket{E_i}$, any target state has the form of Eq.~\eqref{eq:TargetStates}.
Therefore, we can rewrite the formula of $\Eexp$ [cfr. Eq.~\eqref{eq:ProtocolEfficiency2}] as follows (setting $E_0 = 0$ for simplicity of notation):
\begin{equation}
\Eexp = \sum_{i=1}^{N_A-1}\lambda_i E_i +V_M \prt{1-\lambda_0} = \sum_{i=1}^{N_A-1}\lambda_i \prt{E_i+V_M}.
\end{equation}
The minimization problem can be now easily solved by considering the coefficients $\lambda_i$ as the coefficients of a thermal state with respect to a  fictitious Hamiltonian with eigenvalues $\tilde{E}_i = E_i + V_M$ and $\tilde{E}_0 = 0$ (see also Sec.~I and Sec.~II of the Supplemental Material of Ref.~\cite{Piccione2020Energy}). 
By doing so we get
\begin{equation}
\tilde{\lambda}_i = \frac{e^{-\beta_c \tilde{E}_i}}{\sum_{j=0}^{N_A-1} e^{-\beta_c \tilde{E}_j}},
\end{equation}
where $\beta_c$ is obtained by requiring that
\begin{equation}
-\sum_{i=0}^{N_A-1} \tilde{\lambda}_i \ln \tilde{\lambda}_i = \E.
\end{equation}

\section{\label{sec:APPExplicitFormHI} Construction of the interaction matrix obtained through a unitary transformation}

In the zero-temperature limit, dissipation can be used to generate pure states.
If a system is described by a Hamiltonian $H_0$ with a \blue{non}degenerate lowest eigenvalue, dissipation due to the interaction with a zero-temperature bath will generally lead to the ground state of the system~\cite{BookBreuer2002}.
Let us call $\prtg{e_i}_{i=0}^{N-1}$ an eigenbasis for $H_0$. For any desired state $\ket{\psi}$ (for simplicity, we only consider states with $\lambda_0>0$),
by using a proper unitary operator $U (\ket{\psi})$ defined as in Eq.~\eqref{eq:APPUnitaryOperatorGeneral}, it is possible to write  an Hermitian operator $H' (\ket{\psi}) = U (\ket{\psi}) H_0 U^\dagger (\ket{\psi})$ such that $\ket{\psi}$ is the ground state of $H'$:
\begin{equation}
\label{eq:APPNewHamiltonian}
H' (\ket{\psi}) = U (\ket{\psi}) H_0 U^\dagger (\ket{\psi})= 
\sum_{\bl=0}^{N-1} e_\bl \dyad{\psi_\bl},
\end{equation}
where \blue{$e_l$} runs over all the eigenvalues of $H_0$, i.e., with respect to the notation of Sec.~\ref{sec:BackgroundInformation}, $e_\bl$ is a shorthand for all combinations of $A_j + B_k$, for all $A_j$ and $B_k$, and with $e_0 = E_0 = A_0 + B_0$.
Writing $H' = H_0 + H_I$ (we drop the explicit dependence on $\ket{\psi}$ to lighten the notation), we obtain
\begin{equation}
\label{eq:APPInteractionHamiltonian}
H_I = \sum_{\bl=0}^{N-1} e_\bl \prtb{ \dyad{\psi_\bl} - \dyad{e_\bl}}.
\end{equation}

We can explicitly find the coefficients of the Hamiltonian $H_I$ by calculating $U (\ket{\psi})_{j \bl} = \mel{e_j}{U (\ket{\psi})}{e_\bl} =  \ip{e_j}{\psi_\bl}$ [cfr. Eq.~\eqref{eq:APPUnitaryOperatorGeneral}].
For $\bl \geq 1$,  using Eq.~\eqref{eq:APPCanonicalBasis}, we get:
\begin{equation}
\label{eq:APPInternalProduct}
\ip{e_j}{\psi_\bl} =
\begin{cases}
-e^{i (\theta_0 - \theta_\bl)}\sqrt{\frac{\lambda_0 \lambda_\bl}{\gamma_\bl \gamma_{\bl-1}}}, \quad &j=0,\\
0, \quad &0<j <\bl,\\
\sqrt{\frac{\gamma_\bl}{\gamma_{\bl-1}}}, \quad &j=\bl,\\
-e^{ i (\theta_j - \theta_\bl)} \sqrt{\frac{\lambda_j \lambda_\bl}{\gamma_\bl \gamma_{\bl-1}}}, \quad &j >\bl .
\end{cases}
\end{equation}
Let us start from the diagonal elements. For $j=0$ we get
\begin{equation}
\ev{H_I}{e_0} = e_0 (\lambda_0 - 1) + \lambda_0 \sum_{\bl=1}^{N-1} \frac{\lambda_\bl e_\bl}{\gamma_\bl \gamma_{\bl-1}},
\end{equation}
while for $j>0$ we have
\begin{equation}
\ev{H_I}{e_j}=
\lambda_j \prt{ e_0 - \frac{e_j}{\gamma_{j-1}} + \sum_{\bl=1}^{j-1} \frac{\lambda_\bl e_\bl}{\gamma_\bl \gamma_{\bl-1}}}.
\end{equation}

Since $H_I$ is an Hermitian matrix, it suffices to continue the calculations by analyzing the $ j > k \ge 0$ case. For $j>k=0$ we get
\begin{equation}
\!\mel{e_0}{H_I}{e_j} = e^{i (\theta_0 - \theta_j)} \sqrt{\lambda_0 \lambda_j} \prt{\!e_0 - \frac{e_j}{\gamma_{j-1}} + \sum_{\bl=1}^{j-1} \frac{\lambda_\bl e_\bl}{\gamma_\bl \gamma_{\bl-1}}\!} \!,
\end{equation}
\blue{whereas} for $j> k >0$ we have
\begin{align}
&\mel{e_k}{H_I}{e_j} =
\sum_{\bl=0}^{N-1} e_\bl \ip{e_k}{\psi_\bl} \ip{\psi_\bl}{e_j}
\nonumber \\&=
e^{i (\theta_k - \theta_j)}\sqrt{\lambda_k \lambda_j} \prt{ e_0 - \frac{e_k}{\gamma_{k-1}} + 
	\sum_{\bl=1}^{k-1} \frac{\lambda_\bl e_\bl}{\gamma_\bl \gamma_{\bl-1}}}.
\end{align}

By focusing on the case in which the target states are like those of Eq.~\eqref{eq:TargetStates}, we get Eq.~\eqref{eq:InteractionMatrix}.

\section{\label{sec:APPEfficiencyUnitaryGlobal} Efficiency and expended energy for the \nonloc{} unitary approach}

In this Appendix, we obtain the general formulas for the efficiency and expended energy of the \nonloc{} unitary approach by using their definition of Eq.~\eqref{eq:ProtocolEfficiency}.

By using Eqs.~\eqref{eq:APPNewHamiltonian}, \eqref{eq:APPInteractionHamiltonian}, and \eqref{eq:APPInternalProduct} in Eq.~\eqref{eq:ProtocolEfficiency}, we get
\begin{equation}
\label{eq:APPEnergyEfficiency}
\eta 
= \frac{e_0 \prt{\lambda_0 - 1} + \sum_{i=1}^{N-1} \lambda_i e_i }{2 e_0 (\lambda_0 - 1) + \sum_{i=1}^{N-1} \lambda_i e_i \prtq{1+\lambda_0/\prt{\gamma_i \gamma_{i-1}}}}.
\end{equation}
If we set $e_0=0$, which can always be done by adding a constant term to $H_0$, we get
\begin{equation}
\label{APPeq:GlobalUnitaryEfficiency}
\eta = \frac{\sum_{i=1}^{N-1} \lambda_i e_i}{\sum_{i=1}^{N-1} \lambda_i e_i \prtq{1+\lambda_0/\prt{\gamma_i \gamma_{i-1}}}},
\end{equation}
which makes clear that $\eta<1$, as it should be.
When considering target states like those of Eq.~\eqref{eq:TargetStates}, the efficiency of Eq.~\eqref{APPeq:GlobalUnitaryEfficiency} and the denominator therein appearing become, respectively, the efficiency and the energy expense reported in Eq.~\eqref{eq:EfficiencyGlobalUnitary}.

Now, we prove that $\Eexp^S$ [see Eq.~\eqref{eq:ProtocolEfficiency}] cannot be higher than $2 [e_{N-1}-e_0]$ in the \nonloc{} unitary approach:
\begin{multline}
\Eexp^S =  \ev{H_I}{e_0} - \ev{H_I}{\psi} = \\
\ev{\left(H_I+H_0 - H_0\right)}{e_0} - \ev{\left(H_I + H_0 - H_0\right)}{\psi} =\\
\ev{U H_0 U^\dagger}{e_0} - e_0 - \ev{U H_0 U^\dagger}{\psi} +\ev{H_0}{\psi}  =\\
\ev{\left(U H_0 U^\dagger+U^\dagger H_0 U\right)}{e_0} - 2 e_0 \leq 
2 e_{N-1} - 2 e_0,
\end{multline}
where the last inequality comes from the invariance of the spectrum for unitary transformations.

\section{Efficiency and expended energy in the \loc{} unitary approaches\label{sec:APPEfficiencyLocalUnitary}}

Here, we want to calculate the quantities of Eq.~\eqref{eq:ProtocolEfficiency} in the case of the states obtainable within the MSSG unitary approaches, i.e., states $\ket{\phi}$ of the form of Eq.~\eqref{eq:TargetStates}. In these approaches, a state $\ket{\phi}$ is obtained by applying $U_A$ or $U_B$ to $\ket{E_0}$  and then the corresponding generalized CNOT gate [see Eqs.~\eqref{eq:CNOT} and \eqref{eq:CNOTB}]  is applied to the resulting state on system $S$. Let us focus on the $U_A$ case, where
\begin{equation}
H_I = U_{G_A} U_A H_0 U_A^\dagger U_{G_A}^\dagger - H_0.
\end{equation}
First, we notice that $U_{G_A} \ket{A_0 B_i} = U_{G_A}^\dagger \ket{A_0 B_i} = \ket{A_0 B_i}$.
Therefore, using also Eqs.~\eqref{eq:UnitaryOperatorLocalA} and \eqref{eq:APPInternalProduct} applied to the calculation of $\ip{A_i}{\psi_k^A}$, we obtain
\begin{align} 
\ev{H_I}{E_0} 
&= \ev{U_A H_A U_A^\dagger }{A_0} + B_0 - E_0,\nonumber \\
&=  A_0 \prt{\lambda_0-1} +\lambda_0\sum_{i=1}^{N_A-1}\frac{\lambda_i A_i}{\gamma_i \gamma_{i-1}} .
\end{align}
Then, we calculate
\begin{equation}
\ev{H_I}{\phi} = E_0 \prt{1-\lambda_0} - \sum_{i=1}^{N_A-1} \lambda_i E_i.
\end{equation}
Eventually, the expended energy in this case is equal to
\begin{equation}
\label{eq:APPEnergyExpenseA}
\Eexp^A = \sum_{i=1}^{N_A-1} \prtq{\lambda_i \prt{ E_i + \frac{\lambda_0 A_i}{\gamma_i \gamma_{i-1}}} } - \prt{A_0 + E_0} \prt{1-\lambda_0}.
\end{equation}

Now, we show that the expended energy within the global approach is higher that in the MSSG unitary approaches when they generate the same state $\ket{\phi}$. The expended energy in the \nonloc{} approach can be written as [cfr. Eq.~\eqref{eq:APPEnergyEfficiency} applied to a state $\ket{\phi}$]
\begin{equation}
\Eexp^S = \sum_{i=1}^{N_A-1} \lambda_i E_i \prtq{1+\frac{\lambda_0}{\gamma_i \gamma_{i-1}}} - 2 E_0 \prt{1 -\lambda_0},
\end{equation}
and the difference between $\Eexp^S$ and $\Eexp^A$ is equal to
\begin{equation}
\Eexp^S - \Eexp^A = \lambda_0 \sum_{i=1}^{N_A-1} \frac{\lambda_i B_i}{\gamma_i \gamma_{i-1}} - B_0 \prt{1-\lambda_0}.
\end{equation}
Since we can set $B_0=0$ without changing the above quantity, we conclude that the \loc{} version of the protocol \blue{based on $\tilde{U}_A$} always performs better than the \nonloc{} one.
The derivation of $\Eexp^B$ is similar since, analogously to the previous case, $U_{G_B} \ket{A_i B_0} = U_{G_B}^\dagger \ket{A_i B_0} = \ket{A_i B_0}$, and leads to the same expression of Eq.~\eqref{eq:APPEnergyExpenseA} with the quantities  $A_i$ and $B_i$ swapped between them.

To complete the analysis we also write the difference regarding the expended energy in the two \loc{} approaches:
\begin{equation}
\Eexp^B - \Eexp^A = \lambda_0 \sum_{i=1}^{N_A-1} \frac{\lambda_i \prt{B_i-A_i} }{\gamma_i \gamma_{i-1}} - \prt{B_0-A_0} \prt{1-\lambda_0}.
\end{equation}
In this case, which approach is better depends both on the target state and the spectra of $H_A$ and $H_B$.

\section{\label{sec:TwoQubits}Explicit calculations for two qubits}

In this Appendix, we explicitly calculate the three unitary transformations proposed in Sec.~\ref{sec:UnitaryTransformation} and the interaction Hamiltonians proposed in Sec.~\ref{sec:TwoApproaches} for the archetypical example of a bipartite system composed of two qubits, having as free  Hamiltonians
\begin{equation}
H_A = \frac{\hbar \omega_A}{2} \sigma_z^{A},
\qquad
H_B = \frac{\hbar \omega_B}{2} \sigma_z^{B},
\end{equation}
where $\sigma_z^{X}$ is the $z$-Pauli matrix of qubit $X$ ($X=A, B$). 
As basis for $H_0$ we use $\{\ket{00},\ket{01},\ket{10},\ket{11}\}$, where, for each qubit, $\sigma_z^{X} \ket{1} = \ket{1}$ and $\sigma_z^{X}\ket{0} = - \ket{0}$.
Let us define $\omega =\prt{\omega_A + \omega_B}/2$, then the MEES for a two qubit system, as reported in \cite{Piccione2020Energy}, and considering arbitrary phases, is given by
\begin{equation}
\stMin= e^{i \theta_0}\sqrt{\lambda} \ket{00} + e^{i \theta_1}\sqrt{1-\lambda}\ket{11},
\end{equation}
where $1/\lambda=1+e^{-2\beta \hbar \omega}$.

Starting from Eqs.~\eqref{eq:UnitaryOperatorGlobal}, \eqref{eq:UnitaryOperatorLocalA}, \eqref{eq:CNOT}, and \eqref{eq:CNOTB} [$U_B $ is defined analogously to  what done in Eq.~\eqref{eq:UnitaryOperatorLocalA}], straightforward calculations give
for $U_S$, $\tilde{U}_A = U_{G_A} U_A$, and $\tilde{U}_B = U_{G_B} U_B$:
\begin{equation}
U_S = \mqty( 
e^{i \theta_0}\sqrt{\lambda} & 0 & 0 & -e^{i (\theta_0 -\theta_1)}\sqrt{1-\lambda} \\
0 & 1 & 0 & 0 \\
0 & 0 & 1 & 0 \\
e^{i \theta_1}\sqrt{1-\lambda} & 0 & 0 & \sqrt{\lambda}
),
\end{equation}
\begin{widetext}
\begin{equation}
\tilde{U}_A = \mqty( 
e^{i \theta_0}\sqrt{\lambda} & 0 &  -e^{i (\theta_0-\theta_1)}\sqrt{1-\lambda} & 0 \\
0 & e^{i \theta_0}\sqrt{\lambda} & 0 & -e^{i (\theta_0-\theta_1)}\sqrt{1-\lambda} \\
0 & e^{i \theta_1}\sqrt{1-\lambda} & 0 & \sqrt{\lambda} \\
e^{i \theta_1}\sqrt{1-\lambda} & 0 & \sqrt{\lambda} & 0
),
\end{equation}
and
\begin{equation}
\tilde{U}_B = \mqty( 
e^{i \theta_0}\sqrt{\lambda} & -e^{i (\theta_0-\theta_1)}\sqrt{1-\lambda} & 0 & 0\\
0 & 0 & e^{i \theta_1}\sqrt{1-\lambda} & \sqrt{\lambda} \\
0 & 0 & e^{i \theta_0}\sqrt{\lambda} & -e^{i (\theta_0-\theta_1)}\sqrt{1-\lambda} \\
e^{i \theta_1}\sqrt{1-\lambda} & \sqrt{\lambda} & 0 & 0
).
\end{equation}
\end{widetext}
We can see, explicitly, that the three unitary operators proposed in Sec.~\ref{sec:UnitaryTransformation} are different.

Regarding the thermalization approach, we can immediately write down the interaction Hamiltonian obtained through the simple approach:
\begin{widetext}
\begin{equation}
\Hsi_I = \mqty( 
\hbar \omega -V_S \lambda & 0 & 0 & - e^{i (\theta_0-\theta_1)}V_S \sqrt{\lambda\prt{1-\lambda}}\\
0 & \hbar \delta_A& 0 & 0 \\
0 & 0 & \hbar \delta_B & 0 \\
-e^{-i (\theta_0-\theta_1)}V_S \sqrt{\lambda\prt{1-\lambda}} & 0 & 0 & -V_S (1-\lambda) - \hbar \omega
),
\end{equation}
where $\delta_A = \prt{\omega_A - \omega_B}/2$ and $\delta_B = \prt{\omega_B - \omega_A}/2$.
For comparison, the modified simple approach leads to
\begin{equation}
\label{eq:APPInteractionSimpleModifiedQubits}
\Hsim = \mqty( 
\hbar \omega -V_M  \lambda & 0 & 0 & -e^{i (\theta_0-\theta_1)} V_M \sqrt{\lambda\prt{1-\lambda}}\\
0 & 0 & 0 & 0 \\
0 & 0 & 0 & 0 \\
- e^{-i (\theta_0-\theta_1)} V_M \sqrt{\lambda\prt{1-\lambda}} & 0 & 0 & -V_M (1-\lambda) - \hbar \omega
),
\end{equation}
which lacks the diagonal terms $\hbar \delta_A$ et $\hbar \delta_B$.

Turning to the zero-temperature approach based on the \nonloc{} unitary transformation, the use of $U_S$ leads to
\begin{equation}
\label{eq:APPInteractionUnitaryGlobalQubits}
H_I^{S}  = 2 \hbar \omega \mqty( 
1-\lambda & 0 & 0 & -e^{i (\theta_0-\theta_1)}\sqrt{\lambda\prt{1-\lambda}} \\
0 & 0 & 0 & 0 \\
0 & 0 & 0 & 0 \\
-e^{-i (\theta_0-\theta_1)}\sqrt{\lambda\prt{1-\lambda}} & 0 & 0 & \lambda -1
).
\end{equation}
\end{widetext}
Even if this Hamiltonian is similar to that of Eq.~\eqref{eq:APPInteractionSimpleModifiedQubits}, it is not the same one.
This was predictable since the modified simple approach changes the spectrum of $H_0$ \blue{whereas} all unitary ones do not.
Moreover, it does not exist a number $\blue{\mu}$ such that $\Hsim = \blue{\mu} H_I^S$ so that the two Hamiltonians cannot describe the same physics even with a rescaling of the energy.

Moving to the MSSG thermalization approaches, the use of $\tilde{U}_A$ leads to
\begin{widetext}
\begin{equation}
H_I^{A}  =\hbar \omega_A \mqty( 
1-\lambda & 0 & 0 & -e^{i (\theta_0-\theta_1)}\sqrt{\lambda\prt{1-\lambda}} \\
0 & 1-\lambda & -e^{i (\theta_0-\theta_1)}\sqrt{\lambda\prt{1-\lambda}} & 0 \\
0 & -e^{-i (\theta_0-\theta_1)}\sqrt{\lambda\prt{1-\lambda}} & \lambda-1+\frac{\omega_B}{\omega_A} & 0 \\
-e^{-i (\theta_0-\theta_1)}\sqrt{\lambda\prt{1-\lambda}} & 0 & 0 & \lambda -1-\frac{\omega_B}{\omega_A}
),
\end{equation}
\blue{whereas} the use of $\tilde{U}_B$ to
\begin{equation}
H_I^{B}  =\hbar \omega_B \mqty( 
1-\lambda & 0 & 0 & -e^{i (\theta_0-\theta_1)}\sqrt{\lambda\prt{1-\lambda}} \\
0 & \lambda-1+\frac{\omega_A}{\omega_B} & -e^{-i (\theta_0-\theta_1)}\sqrt{\lambda\prt{1-\lambda}} & 0 \\
0 & -e^{i (\theta_0-\theta_1)}\sqrt{\lambda\prt{1-\lambda}} & 1-\lambda & 0 \\
-e^{-i (\theta_0-\theta_1)}\sqrt{\lambda\prt{1-\lambda}} & 0 & 0 & \lambda -1-\frac{\omega_A}{\omega_B}
).
\end{equation}
\end{widetext}
Since all of these interaction Hamiltonians are different (even considering rescaling), we can conclude that the five methods give rise to physically different interactions.

In Ref.~\cite{Piccione2020Energy} the following interaction Hamiltonian was proposed to make $\stMin$ the ground state
\begin{equation}
H_I^{\textup{emp}} = \frac{\hbar g}{2} \prt{\sigma_x^A \sigma_x^B - \sigma_y^A \sigma_y^B},
\end{equation}
where $\sigma_i^X$ are Pauli operators on system $X$ with $i=x,y$.
The matrix structure of this operator is
\begin{equation}
\label{eq:APPInteractionEmpiricQubits}
H_I^{\textup{emp}} = \hbar g \mqty( 
0 & 0 & 0 & 1 \\
0 & 0 & 0 & 0 \\
0 & 0 & 0 & 0 \\
1 & 0 & 0 & 0 ),
\end{equation}
which lacks the diagonal terms present in all the general approaches we provided.
Therefore, the methods that we presented in this paper cannot be used to obtain this empirical Hamiltonian that we found by hand in Ref.~\cite{Piccione2020Energy}.


%

\end{document}